\newtheorem{theorem}{Theorem}[section]
\newtheorem{lemma}{Lemma}[section]
\newtheorem{remark}{Remark}[section]
\newtheorem{example}{Example}[section]
\numberwithin{equation}{section}
\begin{document}
\title{Optimal Constrained Investment in the Cramer-Lundberg model}
\author{Tatiana Belkina}\thanks{This research was supported by the Russian Fund of Basic Research, Grants RFBR 10-01-00767 and  RFBR 11-01-00219}
\address{Laboratory of Risk Theory, Central Economics and Mathematics Institute  of  the Russian Academy of Sciences. , Moscow, Russia}
\author{Christian Hipp}
\address{Institute for Finance, Banking and Insurance, Karlsruhe Institute of
Technology, Karlsruhe, Germany}
\author{Shangzhen Luo}
\thanks{This research was supported by a UNI summer fellowship}
\address{Department of Mathematics, University of Northern Iowa, Cedar Falls, IA, USA 50614}
\author{Michael Taksar}
\thanks{This research was supported by the Norwegian Research Council
Forskerprosjekt ES445026 ``Stochastic Dynamics of Financial Markets"}
\address{Department of Mathematics, University of Missouri, Columbia, MO, USA 60211}
\thanks{Key Words: Stochastic Control, Classical risk model, HJB Equation,
Investment constraints, Ruin Probability}
\thanks{AMS 2010 Subject Classifications. Primary 93E20, 91B28, 91B30, Secondary 49J22, 60G99}
\thanks{Correspondence Email: luos@uni.edu, Postal Address: Department of Mathematics, University of Northern Iowa, Cedar Falls, Iowa 50614-0506, USA}

\begin{abstract}
We consider an insurance company whose surplus is represented by
the classical Cramer-Lundberg process. The company can invest its
surplus in a risk free asset and in a  risky asset, governed by
the Black-Scholes equation. There is a constraint that the
insurance company can only invest in the risky asset at a limited
leveraging level; more precisely, when purchasing, the ratio of
the investment amount in the risky asset to the surplus level is
no more than $a$; and when shortselling, the proportion of the
proceeds from the short-selling
 to the surplus level is
no more than $b$. The objective is to find an optimal investment policy
that minimizes the probability of ruin.
The minimal ruin probability as a function of the initial surplus is
 characterized by a classical solution to
the corresponding Hamilton-Jacobi-Bellman (HJB) equation. We study
the optimal control policy and its properties. The interrelation
between the parameters of the model plays a crucial role in the
qualitative behavior of the optimal policy. E.g., for some ratios
between $a$ and $b$,  quite unusual and at first ostensibly
counterintuitive policies may appear, like short-selling a stock
with a higher rate of return to earn lower interest, or borrowing
at a higher rate to invest in a  stock with lower rate of return.
This is in sharp contrast with the unrestricted case, first
studied in Hipp and Plum (2000), or with the case of no
shortselling and no borrowing studied in Azcue and Muler (2009).
\end{abstract}
\maketitle

\section{Introduction}
Ruin minimization has become a classical criterion in optimization
models and in recent years it has been extensively studied, being
of a natural interest to the policymakers and supervisory
authorities of the insurance companies. Browne \cite{B} was one of
the first to consider the ruin minimization problem for a
diffusion model; there it was found that the optimal investment
policy is to keep a constant amount of money in the risky asset.
Taksar and Markussen in \cite{TM} studied a diffusion
approximation model in which one controls  proportional
reinsurance; they obtained the minimal ruin probability function
and the  optimal reinsurance policy in a closed form. In \cite{S}
Schmidli studied the ruin optimization problem for the classical
Cramer-Lundberg process, with the control  of proportional
reinsurance. In \cite{S1} a similar problem was considered with
investment and reinsurance control. 

In this paper we study a ruin probability minimization problem in
which there are constraints on the investment possibilities.
Namely, the insurance company has an opportunity to invest in a
financial market that consists of a risk free asset and a risky
asset, however, it can buy the risky asset up to the limit, which
is $a$ times the current surplus and it can shortsell the risky
asset up to the limit of no more than $b$ times the current
surplus. If $a>1$, then borrowing to invest is allowed and the
amount borrowed to buy the risky asset is no more than $(a-1)$
times the current surplus. The risky asset is governed by a
geometric Brownian motion, and the surplus is modeled by the
classical compound Poisson risk process. The objective is to find
the optimal investment policy which minimizes the ruin
probability. The model was first considered in Hipp and Plum
\cite{HP}, where there were no constraints on the investment
possibilities, that is the investment in the risky asset could be
at any amount (positive or negative) irrespective of the surplus
level. Further,  Azcue and Muler \cite{AM}, studied the same model
with no shortselling and no borrowing requirement.

It has been noticed, that in the case of no constraints on the investment
 (See \cite{B}, \cite{HP}, \cite{HP2} and \cite{PY}),
 the optimal investment strategy is highly leveraged
when the  surplus levels are small. There  are several papers in which
there are direct or indirect constraints imposed on the leveraging level.
In \cite{AM} bounds on the leveraging level are the result of
a no-shortselling-no-borrowing constraint.
In \cite{LUO}  only a limited amount of borrowing is allowed and it is at a higher rate than
the risk free lending/saving one.

In this paper we consider general constraints on borrowing and
shortselling, which are formulated in proportions to the surplus;
they can be higher or lower than those of
no-shortselling-no-borrowing ones. Note that in many recent
papers, e.g. \cite{AM}, \cite{HP}, and \cite{S}, the risk free
interest rate $r$ is assumed to be zero after inflation
adjustment, and the rate of return of the risky asset $\mu$ is
positive. It thus  excludes the case of $\mu<r$. Here, we assume a
positive interest rate $r$  and we do not assume any relationship
between $\mu$ and $r$. In \cite{HP}, the condition $\mu>r$  is
implicit (there $r=0$), and the optimal policy does not involve
short-selling even though it is allowed. The same phenomenon is
observed in diffusion approximation models as well, e.g.
\cite{LUO} and \cite{PY}.

The generality of constraints on investment brings a whole new
dimension to the possible qualitative behavior of the optimal
policies. Some of those might look counterintuitive, at first. For
example, depending on the relationship between $a$, $b$ and other
parameters of the model, the optimal policy might involve
shortselling not only when $\mu<r$ but also when $\mu>r$.
Moreover, the optimal policy might consist of switching from
maximal borrowing  to maximal shortselling then to maximal
borrowing again as the surplus level increases (see a detailed
analysis of the exponential claim size distribution case at the
end of this paper). This is a manifestation of a rather complex
interplay between the potential profit and risk at
different levels of surplus. In short, at some surplus levels it
is optimal for the company to leverage its risky or the risk free
asset (purchase or short-sell) at the maximum levels ($a$ or $b$);
and to bet on stock's volatility to increase the chances for the
surplus level to bounce back.

Our mathematical technique is based on operator theories
applied for the solution of the corresponding  HJB equation
 (see \cite{AM}, \cite{HP} and \cite{S}).
We first show existence of a classical solution to the
corresponding HJB equation. To this end we first observe that the minimizer
in the HJB equation is constant, on one of the edges of the admissible values for
the proportion, as long as the wealth of the insurer is small, $x<\varepsilon,$ say. 
Then we define a special operator $T$ in the space of continuous functions $f(x)$ on compact 
intervals $I=[\varepsilon, K]$ such that the HJB equation on $I$ is equivalent to $f'(x)=Tf(x)$. 
We prove that the operator $T$ is Lipschitz and conclude that the equation $f'(x)=Tf(x)$ has a solution
on $I.$ Finally we show that the solution, extended to $(0,\infty),$ is bounded and proportional to
the minimal ruin probability. 

The rest of the paper is organized as follows. The optimization
problem is formulated in Section 2. In Section 3,
an operator is defined to show existence of a classical solution to the HJB equation.
A verification theorem is proved in Section 4. In
Section 5 we investigate the case with exponential claim size
distribution and present several numerical examples. The last section is
devoted to economic analysis.

\section{The Optimization Problem}
We assume that without investment the surplus of the insurance company is
governed by the Cramer-Lundberg process:
$$X_t=x + ct-\underset{i=1}{\overset{N(t)}{\sum}}Y_i,$$
where $x$ is the initial surplus,
 $c$ is the premium rate, $N(t)$ is a Poisson process with intensity $\lambda$, the random variables $Y_i$'s are positive i.i.d. representing
 the size of the claims.
Suppose that at the time $t$, the insurance company invests a
fraction  $\theta_t$ of its surplus  into a risky asset whose
price follows a geometric Brownian motion
$$dS_t=\mu S_tdt+\sigma S_tdB_t.$$
Here $\mu$ is the stock return rate, $\sigma$ is the volatility,
and $B_t$ is a standard Brownian motion independent of
$\{N(t)\}_{t\geq 0}$ and $Y_i$'s. Then the fraction $(1-\theta_t)$
of the  the surplus is invested in the risk free asset whose price
is governed by
$$dP_t=rP_tdt,$$
where $r$ is the risk-free interest rate. If $0\le \theta_t\le 1$,
then the insurance company purchases the risky asset at a cost of
no more than its current surplus; if $\theta_t>1$, the insurance
company borrows to invest in the risky asset; and if $\theta_t<0$,
the insurance company shortsells the risky asset to invest in the
risk free asset. Let $\pi:=\{\theta_s\}_{s\geq 0}$ stand for the
control functional. Once $\pi$ is chosen, the surplus process
$X^\pi_t$ is governed by the equation below
\begin{equation}\label{dyn}
X^\pi_t=x+\int_0^t[c+r(1-\theta_s)X_s^\pi+\mu \theta_s
X^\pi_s]ds+\sigma \int_0^t\theta_s X^\pi_sdB_s
-\underset{i=1}{\overset{N(t)}{\sum}}Y_i,
\end{equation}

We assume all the random variables are defined on a complete probability
space $(\Omega, \mathcal{F}, P)$. On this space we define the filtration
 $\{\mathcal{F}_t\}_{t\geq 0}$
generated by processes $\{X_t\}_{t\geq 0}$ and $\{B_t\}_{t\geq
0}$. A control {\it policy} (or just a {\it  control})  $\pi$ is
said to be {\it admissible} if $\theta_t$ is
$\mathcal{F}_t$-predictable and it satisfies  $\theta_t\in
\mathcal {U}=[-b,a]$. We denote by $\Pi$ the set of all admissible
controls.

 In this paper, we make the following assumptions:
(i) the exogenous parameters $a$, $b$, $c$, $r$, $\mu$, $\sigma$,
are positive constants ($b=0$ is not allowed); (ii) the claim
distribution function $F$ has a finite mean and it has a
continuous density with support $(0,\infty)$.

The ruin time of the process $X_t^{\pi}$ under the investment strategy $\pi$ is defined as follows
\begin{eqnarray}
\label{ruintime}
\tau^\pi=\inf\{t\geq 0: X^\pi_t<0\},
\end{eqnarray}
and the survival probability as
\begin{eqnarray}
\label{svprob}
\delta^\pi(x)=1-P(\tau^\pi<\infty).
\end{eqnarray}
The maximal survival probability is defined as
\begin{eqnarray}
\label{opt-svp}
\delta(x)=\underset{\pi\in\Pi}{\sup}\ \delta^\pi(x),
\end{eqnarray}
which is a non-decreasing function of $x$. If we assume that
$\delta$ is twice continuously differentiable, then it solves the
following Hamilton-Jacobi-Bellman equation:
\begin{eqnarray}
\label{HJB} 
\underset{\theta
\in \mathcal{U}}{\sup}\mathcal{L}^{(\theta)}\delta(x)=0,x\ge 0,
\end{eqnarray}
where
\begin{equation}
\label{LM}
\begin{split}
&\mathcal{L}^{(\theta)}\delta(x)=\frac{\sigma^2x^2\theta^2}{2}\delta''(x)
+[c+rx+(\mu-r)\theta x]\delta'(x)-M(\delta)(x),\\
&M(\delta)(x)=\lambda[\delta(x)-\int_0^x\delta(x-s)dF(s)].
\end{split}
\end{equation}
We note that $M(\delta)(x)$ is positive,  given that
$\delta$ is an increasing function on $(0,\infty)$.
The minimum in the HJB equation is attained at some value $ \theta^*(x)\in \mathcal{U}$
and this means that
$$\mathcal{L}^{(\theta^*(x))}\delta(x)=0,x\ge 0,$$
while for all other values $\theta \in \mathcal{U}$ and $x\ge 0$
$$\mathcal{L}^{(\theta)}\delta(x)\le0.$$
This implies that for $x>0$
\begin{equation}\label{Good Operator}
\delta''(x)=2\inf_{\theta \in \mathcal{U}}\{ M(\delta)(x)-[c+rx+(\mu-r)\theta x]\delta'(x)\}/(\sigma^2\theta^2x^2).
\end{equation}
We have used this formula for numerical calculations when $x$ is large enough.

\section{Existence of a smooth solution to the HJB equation}
For any twice continuously differentiable
function $W$, let
\begin{eqnarray}
\label{alphaw}
\alpha_W(x)=-\frac{(\mu-r)W'(x)}{\sigma^2xW''(x)},
\end{eqnarray}
if $W''(x)\neq 0$.
Suppose $W$ is non-decreasing and solves HJB equation~\eqref{HJB} at $x$, then
we can define a maximizer in the following form
\begin{eqnarray}\label{alphastar}
\alpha^*_W(x)=\begin{cases}
\alpha_W(x)&\ \ W''(x)<0,\ -b\leq\alpha_W(x)\leq a;\\
a&\ \ W''(x)<0,\ \alpha_W(x)>a;\\
 &\ \ or\ \ W''(x)>0,\ \alpha_W(x)\leq\frac{a-b}{2};\\
 &\ \ or\ \ W''(x)=0,\ \mu\ge r;\\
-b&\ \ W''(x)<0,\ \alpha_W(x)<-b;\\
 &\ \ or\ \ W''(x)>0,\ \alpha_W(x)>\frac{a-b}{2};\\
 &\ \ or\ \ W''(x)=0,\ \mu<r,
\end{cases}
\end{eqnarray}
and we have
$$\mathcal{L}^{(\alpha^*_W(x))}W(x)=0,x>0.$$
So the minimum is always attained at one of the three points $-b$, $a$, or $\alpha_W(x).$

For $0\neq \gamma \in \mathcal{U}$ consider the equation
\begin{equation}\label{HJBgamma}
\mathcal{L}^{\gamma}V(x)=0,x>0.
\end{equation}
From Proposition 4.2 in \cite{AM}, p. 30, we obtain the existence of a function $V_{\gamma}(x),x>0,$
which is twice continuously differentiable on $(0,\infty),$ with 
\begin{eqnarray}\label{Vgamma}
V_{\gamma}(0+)&=&1,\\
V'_{\gamma}(0+)&=&\lambda/c, \\
\label{V'V''0}
V''_{\gamma}(0+)&=&\frac{\lambda}c \left( \frac{\lambda}c -F'(0+)-\frac{r+\gamma(\mu-r)}c \right)
\end{eqnarray}
satisfying equation (\ref{HJBgamma}). 
One can show the formula for $V''(0+)$ by a similar method of Proposition 4.2 in \cite{AM}.

In the following two Lemmas we show that for small values 
of $x$ the function $V_a(x)$ is a solution to the HJB equation if $\mu>r$, and
$V_{-b}(x)$ is a solution if $\mu<r.$

\begin{lemma}
\label{VVa}
For $\mu>r$, there exists $\varepsilon>0$
such that on $(0,\varepsilon)$ function $V_{a}$
solves
\begin{eqnarray}\label{HJBa}
\underset{\theta\in[-b,a]}{\sup}\mathcal{L}^{(\theta)}\delta(x)=\mathcal{L}^{(a)}\delta(x)=0.
\end{eqnarray}
\end{lemma}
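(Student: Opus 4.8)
The plan is to exploit the fact that, by the construction preceding the lemma (equation \eqref{HJBgamma} with $\gamma=a$, which is admissible since $a>0$), the function $V_a$ already satisfies $\mathcal{L}^{(a)}V_a(x)=0$ for every $x>0$. Proving \eqref{HJBa} therefore reduces to showing that, for $x$ small, the endpoint $\theta=a$ maximizes the map $\theta\mapsto\mathcal{L}^{(\theta)}V_a(x)$ over $\mathcal{U}=[-b,a]$. First I would regard $\mathcal{L}^{(\theta)}V_a(x)$ as a quadratic in $\theta$ and compare its value at $a$ with its value at an arbitrary $\theta\in[-b,a]$.

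The key algebraic step is the factorization
$$\mathcal{L}^{(a)}V_a(x)-\mathcal{L}^{(\theta)}V_a(x)=(a-\theta)\Big[\tfrac{\sigma^2x^2}{2}V_a''(x)(a+\theta)+(\mu-r)\,x\,V_a'(x)\Big].$$
Since $a-\theta\ge 0$ on $[-b,a]$, it suffices to show that the bracketed factor, call it $\psi(\theta,x)$, is nonnegative there. For fixed $x$ the function $\psi(\cdot,x)$ is affine in $\theta$, so its minimum over the compact interval $[-b,a]$ is attained at an endpoint; thus it is enough to check nonnegativity at $\theta=a$ and $\theta=-b$.

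Dividing by $x>0$, the endpoint values are
$$\frac{\psi(a,x)}{x}=\sigma^2 x\,V_a''(x)\,a+(\mu-r)V_a'(x),\qquad \frac{\psi(-b,x)}{x}=\frac{\sigma^2 x}{2}V_a''(x)(a-b)+(\mu-r)V_a'(x).$$
Using the boundary data \eqref{Vgamma}--\eqref{V'V''0} with $\gamma=a$, namely $V_a'(0+)=\lambda/c$ and $V_a''(0+)$ finite, the $V_a''$-terms carry a factor $x$ and so vanish as $x\to 0+$, while $(\mu-r)V_a'(x)\to(\mu-r)\lambda/c>0$ because $\mu>r$. Hence both endpoint expressions tend to the strictly positive limit $(\mu-r)\lambda/c$, and by continuity there is $\varepsilon>0$ on which both remain positive. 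This forces $\psi(\theta,x)>0$ for every $\theta\in[-b,a]$ and $x\in(0,\varepsilon)$, so $\mathcal{L}^{(a)}V_a(x)\ge\mathcal{L}^{(\theta)}V_a(x)$ throughout, which together with $\mathcal{L}^{(a)}V_a(x)=0$ yields \eqref{HJBa}.

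The step I expect to be delicate is precisely the one the factorization is designed to finesse: near the origin the sign of $V_a''(x)$ is not controlled (indeed $V_a''(0+)$ may be positive, negative, or zero), so the concavity versus convexity of $\theta\mapsto\mathcal{L}^{(\theta)}V_a(x)$ and the location of the interior critical point $\alpha_{V_a}(x)$ flip accordingly. The naive route through the case list in \eqref{alphastar} therefore splinters into subcases, with the degenerate case $V_a''(0+)=0$ requiring higher-order information. The factorization sidesteps all of this: because the $V_a''$ contribution to $\psi$ is weighted by $x$, it is negligible next to the order-one term $(\mu-r)V_a'(x)$ for small $x$, which makes the argument uniform in the sign of $V_a''$. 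Rendering $\varepsilon$ quantitative would only require bounding $V_a''$ on a fixed small interval, which follows from its continuity.
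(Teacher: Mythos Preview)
Your proof is correct and takes a different route from the paper. The paper's argument proceeds by splitting on the sign of $V_a''(x)$ and invoking the maximizer classification \eqref{alphastar} in each case: when $V_a''(x)=0$ the linear term decides; when $V_a''(x)>0$ one argues the quadratic in $\theta$ is increasing on $[-b,a]$ (which itself implicitly requires $x$ small so that the vertex lies to the left of $-b$); and when $V_a''(x)<0$ one argues that $\alpha_{V_a}(x)$ is large. Your factorization
\[
\mathcal{L}^{(a)}V_a(x)-\mathcal{L}^{(\theta)}V_a(x)=(a-\theta)\psi(\theta,x)
\]
collapses all three cases into a single check, because the $V_a''$ contribution to $\psi(\theta,x)/x$ carries an extra factor of $x$ and is therefore dominated by the positive constant $(\mu-r)\lambda/c$ regardless of the sign of $V_a''$. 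The affine-in-$\theta$ observation reducing the check to the two endpoints is what makes this uniform. The gain is a shorter, sign-agnostic argument; what the paper's case analysis buys is direct contact with the classification \eqref{alphastar}, which is the organizing structure for the later Theorems~\ref{case1}--\ref{case4}.
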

\begin{proof}
We consider three cases separately and use the representation of the maximizer given in (\ref{alphastar}).
Firstly, for $x$ with $V_a''(x)=0$, $\alpha_{V_a}^*(x)=a$ is a maximizer if $\mu-r>0.$ 
Secondly, for $x$ small with $V_a''(x)>0$, since $V_a'(x)\ge 0$ for $x$ near $0,$ the relevant part for the maximum
$$\theta x(\mu-r)V_a'(x)+\frac 12 \theta^2\sigma^2x^2V''(x)$$
is increasing in $\theta,$ and so we again obtain $\alpha_{V_a}^*(x)=a.$ 
Thirdly for $x$ small with $V_a''(x)<0$,  it holds that
\begin{eqnarray*}
\alpha_{V_{a}}(x)=-\frac{(\mu-r)V'_a(x)}{xV''_a(x)}>\frac{a-b}{2};
\end{eqnarray*}
hence $\alpha_{V_a}^*(x)=a.$ So (\ref{HJBa}) holds in all three cases.
\end{proof}

\begin{lemma}
\label{VVb}
For $\mu<r$, there exists $\varepsilon>0$
such that on $(0,\varepsilon)$ function $V_{-b}$
solves
\begin{eqnarray}\label{HJBb}
\underset{\theta\in[-b,a]}{\sup}\mathcal{L}^{(\theta)}\delta(x)=\mathcal{L}^{(-b)}\delta(x)=0.
\end{eqnarray}
\end{lemma}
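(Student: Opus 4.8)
The plan is to prove Lemma~\ref{VVb} by mirroring the argument for Lemma~\ref{VVa}, exploiting the symmetry between the roles of $a$ and $-b$ that is built into the maximizer representation~\eqref{alphastar}. By construction $V_{-b}$ satisfies $\mathcal{L}^{(-b)}V_{-b}(x)=0$ on $(0,\infty)$, so it suffices to verify that $\theta=-b$ is in fact the maximizer $\alpha^*_{V_{-b}}(x)$ for every $x$ in some interval $(0,\varepsilon)$; once that is established, \eqref{HJBb} follows at once. As in Lemma~\ref{VVa} I would split the verification into the three regimes $V''_{-b}(x)<0$, $V''_{-b}(x)>0$, and $V''_{-b}(x)=0$, and check in each that \eqref{alphastar} returns the value $-b$.

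First I would record the two facts that drive the estimate. The boundary data~\eqref{Vgamma} give $V'_{-b}(0+)=\lambda/c>0$, so by continuity there is $\varepsilon_1>0$ with $V'_{-b}(x)\ge \lambda/(2c)>0$ on $(0,\varepsilon_1)$, while $V''_{-b}$ is bounded there, say $|V''_{-b}(x)|\le C$, by~\eqref{V'V''0} and continuity. Writing
$$\alpha_{V_{-b}}(x)=\frac{(r-\mu)V'_{-b}(x)}{\sigma^2xV''_{-b}(x)}$$
and using $\mu<r$ together with $V'_{-b}(x)>0$, the sign of $\alpha_{V_{-b}}(x)$ equals the sign of $V''_{-b}(x)$, while its magnitude satisfies
$$|\alpha_{V_{-b}}(x)|\ge \frac{(r-\mu)\,\lambda/(2c)}{\sigma^2 C\,x}\longrightarrow\infty\qquad(x\to0+).$$
Hence there is a single $\varepsilon\in(0,\varepsilon_1)$ on which $|\alpha_{V_{-b}}(x)|>\max\{a,b\}$, irrespective of the sign of $V''_{-b}$. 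On this interval the three cases close immediately. If $V''_{-b}(x)<0$, then $\alpha_{V_{-b}}(x)<-b$, and the $W''<0$ branch of~\eqref{alphastar} gives $\alpha^*_{V_{-b}}(x)=-b$. If $V''_{-b}(x)>0$, then $\alpha_{V_{-b}}(x)>a\ge(a-b)/2$, and the $W''>0$ branch again yields $-b$; equivalently, $\theta x(\mu-r)V'_{-b}(x)+\tfrac12\theta^2\sigma^2x^2V''_{-b}(x)$ is then a convex parabola in $\theta$ whose vertex lies to the right of the midpoint $(a-b)/2$, so its supremum over $[-b,a]$ is attained at $\theta=-b$. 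If $V''_{-b}(x)=0$, the $\mu<r$ clause of~\eqref{alphastar} gives $-b$ directly.

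The one point requiring care, and the main (if mild) obstacle, is the uniformity of $\varepsilon$: the sign of $V''_{-b}$ is not known a priori and could in principle oscillate as $x\to0+$, so one cannot simply argue case by case on fixed subintervals. This is exactly why I would extract the \emph{uniform} divergence $|\alpha_{V_{-b}}(x)|\to\infty$ from the displayed lower bound, which depends only on $V'_{-b}\ge\lambda/(2c)$ and $|V''_{-b}|\le C$ and not on the sign of $V''_{-b}$. Since the sign of $\alpha_{V_{-b}}$ tracks that of $V''_{-b}$ automatically, in each regime the large magnitude pushes $\alpha_{V_{-b}}$ into precisely the branch of~\eqref{alphastar} whose output is $-b$; thus a single $\varepsilon$ works and \eqref{HJBb} holds throughout $(0,\varepsilon)$.
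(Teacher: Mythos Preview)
Your proposal is correct and follows essentially the same approach as the paper's proof of Lemma~\ref{VVa} (the paper omits the proof of Lemma~\ref{VVb} as symmetric). Your presentation is in fact slightly cleaner: by extracting the single uniform estimate $|\alpha_{V_{-b}}(x)|\ge (r-\mu)\lambda/(2c\sigma^2Cx)\to\infty$ and then reading off the correct branch of~\eqref{alphastar} according to the sign of $V''_{-b}$, you make explicit the uniformity of $\varepsilon$ across the three sign cases, whereas the paper handles the convex case for $V_a$ by the more ad hoc observation that the quadratic is increasing in $\theta$ on $[-b,a]$ for small $x$.
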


The results in lemmas \ref{VVa} and \ref{VVb} are intuitively appealing:
at low surplus levels,
when the stock return rate $\mu$ is higher than the interest rate $r$,
the company invests in the stock at the maximum level $a$;
on the other hand, when the interest rate is higher,
the company would invest in the risk-free asset at the maximum level of $1+b$
(all the surplus together with shortselling proceeds at the maximum level $b$).
 
For $\varepsilon >0$ the function $V_{\gamma}(x)$ with $\gamma \in \{a,-b\}$ 
is now extended to the range $[0,\infty)$ as follows.
To this end,
fix $\varepsilon<K<\infty,$ and a positive decreasing function $A(x)$ with $A(x)<\min(a,b)$ 
(to be chosen later). 
On the set $\mathcal{C}$ of functions $w(x)$ which are continuous on $[\varepsilon,K]$ 
consider the operator
\begin{equation}\label{Operator} 
Tw(x)=2\inf_{\theta \in \mathcal{U},|\theta| > A(x)} \{ M_{\gamma}(W)(x) -[c+rx+\theta x(\mu-r)]w(x)\} /(\theta^2\sigma^2x^2),
\end{equation}
where $$M_{\gamma}(W)(x)=\lambda \left(W(x)-\int_0^{\varepsilon} V_{\gamma}(y)f(x-y)dy-\int_{\varepsilon}^x W(y)f(x-y)dy\right),$$ 
$$W(x)=V_{\gamma}(\varepsilon)+\int_{\varepsilon}^x w(y)dy,$$
and $f(x)$ is the common density of the claims.

\begin{lemma}
For $w\in \mathcal{C}$ the function $Tw(x)$ is continuous on $[\varepsilon,K].$ 
Furthermore, the operator $T$ is Lipschitz with respect to the supremum norm on $\mathcal{C}.$
\end{lemma}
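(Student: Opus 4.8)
The plan is to write $Tw(x)=\inf_{\theta}g_w(x,\theta)$ for the integrand
$$g_w(x,\theta)=\frac{2\{M_{\gamma}(W)(x)-[c+rx+\theta x(\mu-r)]w(x)\}}{\theta^2\sigma^2x^2},\qquad W(x)=V_{\gamma}(\varepsilon)+\int_{\varepsilon}^{x}w(y)\,dy,$$
and reduce the statement to two independent facts: joint continuity of $g_w$ together with continuity of the feasible set in $x$, and a uniform control of the $w$-dependence of $g_w$. First I would record the elementary structural features. On $[\varepsilon,K]$ the feasible set $\{\theta\in\mathcal U:|\theta|>A(x)\}=[-b,-A(x))\cup(A(x),a]$ is nonempty because $A(x)<\min(a,b)$, and since $g_w(x,\cdot)$ is continuous on $\{|\theta|\ge A(x)\}$ (there $\theta\neq0$ and $x\ge\varepsilon$, so the denominator never vanishes) the infimum is unchanged if the strict inequality is relaxed to $|\theta|\ge A(x)$. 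Thus $Tw(x)$ is an infimum of a continuous function over the \emph{compact} set $S(x)=[-b,-A(x)]\cup[A(x),a]$, hence finite and attained. Because $A$ is decreasing with $A(K)>0$ and $x\ge\varepsilon$, the denominator obeys $\theta^2\sigma^2x^2\ge A(K)^2\sigma^2\varepsilon^2>0$ uniformly over all admissible $(x,\theta)$, a bound I will use throughout.

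For continuity of $Tw$ I would first check that the data defining $g_w$ are continuous in $x$. The primitive $W$ is $C^1$, hence continuous; the convolution terms in $M_{\gamma}(W)$ are continuous because $f$ is continuous and $V_{\gamma},W$ are bounded on the relevant ranges (after the substitution $s=x-y$ one has $\int_{\varepsilon}^{x}W(y)f(x-y)\,dy=\int_{0}^{x-\varepsilon}W(x-s)f(s)\,ds$, which depends continuously on $x$, and similarly for the $V_{\gamma}$–integral). Consequently $g_w$ is jointly continuous on $\{(x,\theta):\varepsilon\le x\le K,\ \theta\in S(x)\}$. Taking $A$ continuous (we are free to do so, as $A$ is "to be chosen later"), the constraint correspondence $x\mapsto S(x)$ is compact-valued and continuous, so Berge's maximum theorem gives continuity of the value $x\mapsto\inf_{\theta\in S(x)}g_w(x,\theta)=Tw(x)$. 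Equivalently, one can argue explicitly: writing the numerator as $2\bigl(P(x)-\theta Q(x)\bigr)$, the only interior stationary point is $\theta=2P(x)/Q(x)$, so the minimum is attained at one of $a$, $-b$, $\pm A(x)$, or $2P(x)/Q(x)$ when the latter lies in $S(x)$; each candidate value is continuous in $x$, and whenever the stationary point enters or leaves $S(x)$ it does so through $\pm A(x)$ or an endpoint, where its value coincides with that candidate, so $Tw$ does not jump.

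For the Lipschitz estimate I would use the elementary inequality $\bigl|\inf_{\theta}g_1-\inf_{\theta}g_2\bigr|\le\sup_{\theta}|g_1-g_2|$ over the common feasible set, reducing matters to a uniform bound on $|g_{w_1}(x,\theta)-g_{w_2}(x,\theta)|$. Writing $W_1,W_2$ for the primitives of $w_1,w_2$, the $V_\gamma$–integral cancels in the difference, and one has $|W_1(x)-W_2(x)|\le(K-\varepsilon)\|w_1-w_2\|$, hence $|M_{\gamma}(W_1)(x)-M_{\gamma}(W_2)(x)|\le 2\lambda(K-\varepsilon)\|w_1-w_2\|$ using $\int_{0}^{x-\varepsilon}f\le1$ (here $\|\cdot\|$ is the supremum norm). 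The coefficient $|c+rx+\theta x(\mu-r)|$ is bounded by $C_1:=c+rK+\max(a,b)K|\mu-r|$ on the admissible range. Since the two integrands share the same denominator, combining these with the uniform lower bound on $\theta^2\sigma^2x^2$ yields
$$|g_{w_1}(x,\theta)-g_{w_2}(x,\theta)|\le\frac{2\bigl(2\lambda(K-\varepsilon)+C_1\bigr)}{A(K)^2\sigma^2\varepsilon^2}\,\|w_1-w_2\|$$
uniformly in $(x,\theta)$, which is exactly the claimed Lipschitz bound. The main obstacle is the continuity claim rather than the Lipschitz one: the feasible set $S(x)$ itself moves with $x$, so one cannot simply invoke continuity of an infimum over a fixed domain, and some care — continuity of $A$, or the explicit tracking of the minimizer as it passes between the interior stationary point and the boundary — is needed to exclude jumps in $Tw$.
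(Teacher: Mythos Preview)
Your Lipschitz argument is essentially identical to the paper's: both bound $|Tw_1-Tw_2|$ by $\sup_{\theta}|g_{w_1}-g_{w_2}|$ and then estimate the three ingredients (the $W$-difference, the convolution term, and the drift coefficient) separately, using the uniform lower bound $\theta^2\sigma^2x^2\ge A(K)^2\sigma^2\varepsilon^2$ on the denominator. Your constants are slightly sharper but the structure is the same.

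The continuity argument is where you and the paper genuinely diverge. The paper fixes the $\theta$-domain once and for all as $\{\theta\in\mathcal U:|\theta|\ge A(K)\}$, shows that the family $x\mapsto g_w(x,\theta)$ is uniformly (equi)continuous on $[\varepsilon,K]$ over that fixed family, and then quotes the fact that a pointwise infimum of a uniformly equicontinuous family is continuous. This sidesteps the moving-boundary issue entirely but, as you observe in your last paragraph, it literally proves continuity of $x\mapsto\inf_{|\theta|\ge A(K)}g_w(x,\theta)$ rather than of $Tw(x)=\inf_{|\theta|>A(x)}g_w(x,\theta)$; the paper does not explain why these coincide. Your route---closing up the feasible set to $S(x)=[-b,-A(x)]\cup[A(x),a]$, invoking Berge's theorem under continuity of $A$, or alternatively tracking the finitely many candidate minimizers $a,-b,\pm A(x),2P(x)/Q(x)$ and noting their values match at handoffs---confronts the $x$-dependence of the constraint set directly. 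That buys you a rigorous proof of the stated claim at the cost of the extra hypothesis that $A$ be continuous (which you flag, and which is in mild tension with the piecewise-constant $A$ constructed later in the paper). Either way, your identification of the moving feasible set as the real subtlety is on point and goes beyond what the paper's proof makes explicit.
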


\begin{proof}
To prove continuity of $Tw(x)$ we show that for fixed $w\in \mathcal{C}$ the functions
$$x\to \{ M_{\gamma}(W)(x) -[c+rx+\theta x(\mu-r)]w(x)\} /(\theta^2\sigma^2x^2),\ \theta \in \mathcal{U},|\theta|\ge A(K)$$
are uniformly continuous on $[\varepsilon,K].$ This is true for the functions $$x\to [c+rx+\theta x(\mu-r)]w(x)$$ and 
$$x\to 1/(\theta^2\sigma^2x^2).$$
The representation
$$M_{\gamma}(W)(x)=\lambda \left(W(x)-\int_{x-\varepsilon}^x V_{\gamma}(x-y)f(y)dy-\int_0^{x-\varepsilon} W(x-y)f(y)dy\right)$$ 
shows that also $M_{\gamma}(W)(x)$ is continuous on $[\varepsilon,K].$
Since the maximum of uniformly continuous functions is continuous, we have continuity of $Tw(x).$

Now we consider two continuous functions $v(x),w(x)$ on $[\varepsilon,K]$ and use the norm 
$$||v(x)||=\sup\{|v(x)|:\varepsilon \le x\le K\}.$$ 
Then the inequalities
$$|V(x)-W(x)|\le \int_0^x |v(y)-w(y)|dy\le K ||v-w||,$$
$$\left| \int_0^{x-\varepsilon} (V(x-y)-W(x-y))f(y)dy \right|\le K^2||v-w||,$$
$$|[c+rx+\theta x(\mu-r)][v(x)-w(x)]|\le [c+rK+(a+b)K(\mu+r)|]|v-w||$$
together with boundedness of $1/(\theta^2x^2)$ imply that there exists a constant $C$
such that for all $v,w\in \mathcal{C}$ we have 
$$||Tv-Tw||\le C||v-w||.$$
\end{proof}

Using the standard Piccard-Lindel\"of argument we now obtain that for all $K>\varepsilon$ there exists
a continuously differentiable function $w(x)$ satisfying
\begin{equation}\label{DiffEq}
w'(x)=Tw(x),\ w(\varepsilon)=V_{\gamma}'(\varepsilon).
\end{equation}
With $K\to \infty$ we obtain a similar function defined on $[\varepsilon,\infty).$
We further show $w(x)>0$ on $[0,\infty)$.
Define $x_0=\inf\{x\ge 0: w(x)=0\}$.
Suppose $x_0<\infty$, then it holds $w(x_0)=0$. Thus
$$w'(x_0)=Tw(x_0)=
\inf_{\theta \in \mathcal{U},|\theta| > A(x_0)}M_\gamma(W)(x_0)/(\theta^2\sigma^2x_0^2)>0,$$
which contradicts:
$$w'(x_0)=\lim_{\epsilon\to 0+}\frac{w(x_0)-w(x_0-\epsilon)}{\epsilon}\le 0.$$
We then conclude $w$ is never $0$ and hence positive on $[0,\infty)$.

Next we proceed to select an appropriate function $A(x)$ such that an
anti-derivative of the solution $w$ of equation \eqref{DiffEq} solves the HJB equation.

For any $W\in C^1(0,\infty)$ and $W'(x)>0$, write
\begin{equation}
\label{fun1}
\begin{split}
\phi_W(x)&=\frac{2[M(W)(x)-(c+rx)W'(x)]}{(\mu-r)xW'(x)},\\
\psi_W(x)&=-\frac{(\mu-r)^2[W'(x)]^2}{2\sigma^2[M(W)(x)-(c+rx)W'(x)]}.
\end{split}
\end{equation}
One can check that $\psi_W(x)=-\frac{(\mu-r)W'(x)}{\sigma^2x\phi_W(x)}$, and that $\phi_W(x)=\alpha_W(x)$ and $\psi_W(x)=W''(x)$ 
if $W$ solves $\mathcal{L}^{(\alpha_W(x))}W(x)=0$.

\begin{lemma}
\label{S10}
\begin{itemize}
\item[(i)]{For $\mu>r$, since $V_a$  solves \eqref{HJBa}, 
it holds $\phi_{V_a(x)}\geq a$ on $(0, \varepsilon)$;}
\item[(ii)]{for $\mu<r$, since $V_{-b}$  solves \eqref{HJBb},
 it holds $\phi_{V_{-b}}(x)\leq -b$ on $(0, \varepsilon)$.}
\end{itemize}
\end{lemma}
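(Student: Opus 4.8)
The plan is to read off the sign of $\phi_{V_\gamma}(x)-\gamma$ directly from a one-sided optimality condition for the maximizing endpoint, rather than routing through $\alpha_{V_\gamma}(x)$; the latter would force a case split on the sign of $V_\gamma''(x)$ and would be undefined wherever $V_\gamma''(x)=0$. I treat the two parts in parallel, writing $\gamma=a$ for (i) and $\gamma=-b$ for (ii). The inputs are supplied by Lemmas~\ref{VVa} and~\ref{VVb}: after shrinking $\varepsilon$ if necessary, on $(0,\varepsilon)$ one has $\mathcal{L}^{(\gamma)}V_\gamma(x)=0$, and the map $\theta\mapsto\mathcal{L}^{(\theta)}V_\gamma(x)$ attains its maximum over $[-b,a]$ at the endpoint $\theta=\gamma$. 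In addition $V_\gamma'(0+)=\lambda/c>0$, so by continuity I may assume $V_\gamma'(x)>0$ throughout $(0,\varepsilon)$.

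First I would extract the first-order consequence of $\gamma$ being the maximizing endpoint. Since $\theta\mapsto\mathcal{L}^{(\theta)}V_\gamma(x)$ is a smooth (quadratic) function of $\theta$, a maximum at the right endpoint $a$ forces
\begin{equation*}
\frac{\partial}{\partial\theta}\mathcal{L}^{(\theta)}V_a(x)\Big|_{\theta=a}=\sigma^2x^2aV_a''(x)+(\mu-r)xV_a'(x)\ \geq\ 0,
\end{equation*}
whereas a maximum at the left endpoint $-b$ forces the corresponding derivative at $\theta=-b$ to be $\leq 0$; otherwise a value of $\theta$ just inside $[-b,a]$ would strictly increase $\mathcal{L}^{(\theta)}V_\gamma(x)$, contradicting maximality at $\gamma$. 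Note this holds irrespective of the sign of $V_\gamma''(x)$.

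Next I would convert this into the desired inequality for $\phi$. Using $\mathcal{L}^{(\gamma)}V_\gamma(x)=0$ to write $M(V_\gamma)(x)-(c+rx)V_\gamma'(x)=\tfrac12\sigma^2x^2\gamma^2V_\gamma''(x)+(\mu-r)\gamma xV_\gamma'(x)$ and substituting into the definition~\eqref{fun1} of $\phi_{V_\gamma}$, a short simplification yields
\begin{equation*}
\phi_{V_\gamma}(x)-\gamma=\frac{\gamma}{(\mu-r)xV_\gamma'(x)}\left(\sigma^2x^2\gamma V_\gamma''(x)+(\mu-r)xV_\gamma'(x)\right),
\end{equation*}
whose bracket is exactly the endpoint derivative from the previous step. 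For (i), where $\gamma=a>0$ and $\mu>r$, the prefactor $a/((\mu-r)xV_a'(x))$ is positive and the bracket is $\geq 0$, so $\phi_{V_a}(x)\geq a$. For (ii), where $\gamma=-b<0$ and $\mu<r$, the prefactor $-b/((\mu-r)xV_{-b}'(x))$ is again positive (numerator and denominator are both negative) while the bracket is now $\leq 0$, so $\phi_{V_{-b}}(x)+b\leq 0$, that is $\phi_{V_{-b}}(x)\leq -b$. This settles both statements on $(0,\varepsilon)$.

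The one step deserving care is the algebraic identity in the last display: its point is to isolate precisely the endpoint derivative whose sign is controlled by the optimality of $\gamma$, so that no case analysis on $\mathrm{sign}\,V_\gamma''(x)$---and no appeal to the possibly undefined $\alpha_{V_\gamma}$---is required. Everything else is sign bookkeeping based on $\mu>r$ (respectively $\mu<r$), $V_\gamma'>0$, and $x>0$.
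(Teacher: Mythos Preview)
Your argument is correct and is in fact cleaner than the paper's. The paper proceeds by contradiction: assuming $\phi_{V_a}(x)<a$ at some point, it first deduces from $\mathcal{L}^{(a)}V_a(x)=0$ that $V_a''(x)<0$, then uses concavity of $\theta\mapsto\mathcal{L}^{(\theta)}V_a(x)$ together with the fact that $a$ is the maximizer to force the vertex $\alpha_{V_a}(x)\geq a$, and finally unwinds this back into $\phi_{V_a}(x)\geq a$, a contradiction. You bypass this detour entirely by writing down the first-order (KKT) inequality at the active endpoint and combining it with the algebraic identity
\[
\phi_{V_\gamma}(x)-\gamma=\frac{\gamma}{(\mu-r)xV_\gamma'(x)}\Bigl(\sigma^2x^2\gamma V_\gamma''(x)+(\mu-r)xV_\gamma'(x)\Bigr),
\]
which is exactly $\gamma/((\mu-r)xV_\gamma'(x))$ times $\partial_\theta\mathcal{L}^{(\theta)}V_\gamma(x)\big|_{\theta=\gamma}$. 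This handles both cases $(\mu>r,\gamma=a)$ and $(\mu<r,\gamma=-b)$ in one stroke, never requires $V_\gamma''(x)\neq 0$, and never invokes $\alpha_{V_\gamma}$. The paper's route has the minor advantage of tying the result explicitly to the auxiliary functions $\xi_{\gamma,W},\eta_{\gamma,W}$ used later in Appendix~2, but your approach is more transparent and robust.
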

Its proof is given in Appendix 1. 

We denote by $V^{(n)}$ the solution of equation \eqref{DiffEq} with
$A(x)\equiv \beta/n$ where $n$ is a positive integer and $\beta=\min\{a, b\}$.

Now define
$$x_n^*=\inf\{x:|\phi_{V^{(n)}}(x)|<\beta/n\},$$
and set $x_n^*=\infty$ if $\phi_{V^{(n)}}(x)\geq \beta /n$ for all $x\in[0,\infty)$.
We have $x_n^*>0$ by Lemma~\ref{S10}, and 
\begin{eqnarray}
\label{in5}
|\phi_{V^{(n)}}(x_n^*)|=\beta/n,
\end{eqnarray}
when $x_n^*<\infty$, by continuity of $\phi_{V^{(n)}}$.
Now we see $V^{(n)}$ is twice continuously differentiable and solves HJB equation~\eqref{HJB} on $(0,x_n^*)$;
further notice that for $x\in (0,x_n^*)$ we have $|\alpha^*_{V^{(n)}}(x)|>\beta/(n+1)$ (it equals $a$, $-b$ or $|\phi_{V^{(n)}}(x)|$).
Thus, $V^{(n)}(x)=V^{(n+1)}(x)$ for $x\in (0,x_n^*),$ and $x_n^*\leq x_{n+1}^*$ provided that both $x_n^*$ and $x_{n+1}^*$ are finite. 

Now we define 
\begin{equation}\label{A}
A(x)=\begin{cases}\beta &\ \ 0<x\le \varepsilon\\
\beta/(n+1) &\ \ x_{n}^*<x\le x_{n+1}^*  \end{cases},
\end{equation}
and denote
\begin{equation}
\label{xstar}
x^*=\underset{n\rightarrow \infty}{\lim}x_n^*;
\end{equation}
for any $x\in(0,x^*)$, write
\begin{equation}\label{V}
V(x)=\underset{n\rightarrow \infty}{\lim}V^{(n)}(x).
\end{equation}

We note it holds $V(x)=V^{(n)}(x)$ for $x\in [0,x^*_n]$.
From the previous discussions,
we see that $V$ is twice continuously differentiable and solves HJB equation~\eqref{HJB} on $(0,x^*)$.
And it holds $|\phi_V(x)|>A(x)$ on $(0,x^*)$. In the sequel, we write $v=V'$; and we have $v(x)>0$ on $(0,x^*)$. Further $V$ is bounded on $(0,x^*)$. In fact, one can show 
$V^{(n)}(\infty)$ is bounded (as the verification theorem) and $V^{(n)}(x)/V^{(n)}(\infty)$
is the maximal survival probability when investment control is restricted over region $\mathcal{U}_n:=[-b, -\beta/n]\cup [\beta/n, a]$. We then see that $V^{(n)}(\infty)$ decreases in $n$. 
This shows boundedness of $V$.

Now we proceed to show that $x^*$ is infinite. We state two lemmas without proof.
The following lemma implies that a function that solves the HJB equation coincides with
the fixed point of operator \eqref{Good Operator}.
\begin{lemma}
\label{UV}
Suppose function $u(x)$ satisfies 
\begin{itemize}
\item[(i)]{$u(x)=v(x)$ on $(0, x_0]$ and
$u(x)\in C^1(x_0,x_0+\delta_0)\cap C[x_0,x_0+\delta_0)$ where $x_0\geq 0$, 
$\delta_0>0$ and $x_0+\delta_0<x^*$;}
\item[(ii)]{function $U(x)$  
solves HJB equation \eqref{HJB} on $(x_0,x_0+\delta_0)$, where $U(x)=1+\int_0^x u(s)ds$.}
\end{itemize}
Then $u(x)=v(x)$ on $(x_0,x_0+\delta_0)$. 
\end{lemma}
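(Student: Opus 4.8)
The plan is to turn the second-order equation \eqref{HJB} into a first-order integro-differential equation for the derivative and then close the argument with a Gronwall estimate. The decisive structural fact is \eqref{Good Operator}: for any non-decreasing classical solution $W$ of \eqref{HJB} with $W'(x)>0$ one has
$$W''(x)=\mathcal G[W](x):=2\inf_{\theta\in\mathcal U}\frac{M(W)(x)-[c+rx+(\mu-r)\theta x]W'(x)}{\sigma^2\theta^2x^2},$$
and the right-hand side depends on $W$ only through the pointwise value $W'(x)$ and the functional $M(W)(x)$, the latter determined by $W$ on $[0,x]$. Since $x_0+\delta_0<x^*$, the function $V$ solves \eqref{HJB} on $(x_0,x_0+\delta_0)$, and by hypothesis so does $U$; writing $u=U'$ and $v=V'$ we therefore get $u'=\mathcal G[U]$ and $v'=\mathcal G[V]$ on that interval. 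Moreover $u=v$ on $(0,x_0]$ forces $U=V$ on $[0,x_0]$ (both equal $1$ at the origin), so $u$ and $v$ carry identical data at $x_0$ and an identical claim history there. The lemma is thus reduced to uniqueness for this first-order equation.

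Granting that $\mathcal G$ is Lipschitz in the pair $(W',M(W))$, with a constant $L$ uniform on a closed subinterval, the Gronwall step is routine. Put $g(x)=|u(x)-v(x)|$, so $g(x_0)=0$. Because $U-V=\int_{x_0}^{\,\cdot}(u-v)$ vanishes on $[0,x_0]$, one checks $|M(U)(x)-M(V)(x)|\le 2\lambda\int_{x_0}^x g$, whence
$$|u'(x)-v'(x)|=|\mathcal G[U](x)-\mathcal G[V](x)|\le L\,g(x)+2L\lambda\int_{x_0}^x g.$$
Integrating from $x_0$ and absorbing the iterated integral over the bounded interval yields $g(x)\le C\int_{x_0}^x g$, so $g\equiv 0$ on a small closed subinterval. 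Restarting at its right endpoint (where the matching values again coincide with those of the $C^2$ function $V$, keeping all constants uniform) propagates $u=v$ across all of $(x_0,x_0+\delta_0)$.

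The only non-routine point, and the main obstacle, is to show that $\mathcal G[W]$ is a genuine locally Lipschitz function of $(x,W',M(W))$; equivalently, that \eqref{HJB} determines $W''$ uniquely and stably. I would argue by monotonicity. Fixing $x$, $p=W'(x)$, $m=M(W)(x)$ and treating the value $s=W''(x)$ as a free variable, set $\Phi(s)=\sup_{\theta\in\mathcal U}\mathcal L^{(\theta)}W$, a supremum of functions affine in $s$ with slopes $\sigma^2x^2\theta^2/2$; hence $\Phi$ is convex, and for $s_1<s_2$ one has $\Phi(s_2)-\Phi(s_1)\ge \tfrac12\sigma^2x^2(\theta^*)^2(s_2-s_1)$, where $\theta^*$ is a maximizer at $s_1$. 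The construction is carried out for $\mu\neq r$ (Lemmas~\ref{VVa} and~\ref{VVb}), and on $(0,x^*)$ it guarantees $|\phi_V|>A>0$, so by \eqref{alphastar} the maximizer is one of $a$, $-b$, $\alpha_W$, all nonzero; thus $\Phi$ is strictly increasing and the equation $\Phi(s)=0$ has at most one root. Existence of the root is exactly the hypothesis that $U$ and $V$ solve \eqref{HJB}. For the Lipschitz bound I need $\theta^*$ bounded away from $0$: the boundary values satisfy $|\theta|\ge\min(a,b)>0$, while the interior value obeys $|\alpha_W|=(\mu-r)W'/(\sigma^2x|W''|)$, bounded below once $W''$ is bounded and $W'$ is bounded below by a positive constant — both of which hold for $V$ on a closed subinterval of $(0,x^*)$, and transfer to $U$ near $x_0$ since $\mathcal G[U](x)\to V''(x_0)$ as $x\downarrow x_0$ (by continuity of the root) keeps $U''$ bounded and $u$ positive. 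With $\Phi'(s)$ bounded below by a positive constant and the parameter dependence through $(p,m)$ affine, the implicit function theorem (or a direct comparison of the two convex profiles) gives the uniform estimate $|s^*(p_1,m_1)-s^*(p_2,m_2)|\le L(|p_1-p_2|+|m_1-m_2|)$, which is what the Gronwall step consumed.

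Finally, the degenerate case $x_0=0$, in which the hypothesis on $(0,x_0]$ is vacuous, would be treated separately using the boundary data $V(0+)=1$ and $V'(0+)=\lambda/c$ that any solution must satisfy; once these pin down the common initial behaviour, the same continuation argument applies.
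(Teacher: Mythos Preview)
The paper states this lemma \emph{without proof} (see the sentence ``We state two lemmas without proof'' immediately preceding Lemma~\ref{UV}), so there is no written argument to compare against. Your approach is correct and is precisely the one the paper's own machinery sets up: the Lipschitz estimate you need is Lemma~3.3 for the operator $T$ of \eqref{Operator}, and the Picard--Lindel\"of uniqueness for $w'=Tw$ is what the paper already invokes right after \eqref{DiffEq} to construct $v$. The only point the paper leaves implicit, and which you correctly isolate as the crux, is that $u$ \emph{also} satisfies $u'=Tu$ on $(x_0,x_0+\delta_0)$ --- i.e.\ that the cutoff $|\theta|>A(x)$ does not change the infimum when applied to $U$. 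Your monotonicity argument via $\Phi(s)$ handles this: the HJB maximizer $\alpha^*_U$ is always one of $a$, $-b$, or the interior vertex $\alpha_U$, and on the interior branch $\alpha_U=\phi_U$ (since there $U$ satisfies $\mathcal L^{(\alpha_U)}U=0$). Because $\phi_U(x)\to\phi_V(x_0)$ as $x\downarrow x_0$ and $|\phi_V|>A$ on $(0,x^*)$, the maximizer for $U$ stays bounded away from zero near $x_0$, so the restricted and unrestricted infima coincide and your Gronwall step goes through. The continuation across the whole interval is then immediate, since once $u=v$ on a short piece the uniform bounds on $v$, $V''$, $\phi_V$ over any compact subinterval of $(0,x^*)$ transfer verbatim to $u$, $U''$, $\phi_U$.

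A minor streamlining: rather than re-deriving a Lipschitz bound for your $\mathcal G$ via the implicit function theorem, you can simply observe that $|\alpha^*_U(x)|\ge\min\bigl(|\phi_U(x)|,\min(a,b)\bigr)>A(x)$ near $x_0$, so the infimum in \eqref{Good Operator} for $U$ is attained in the region $|\theta|>A(x)$; hence $u'(x)=Tu(x)$ with (a trivial variant of) the operator in \eqref{Operator}, and Lemma~3.3 together with $u(x_0)=v(x_0)$ and the shared history on $(0,x_0]$ finishes by the uniqueness already used after \eqref{DiffEq}. Your remark on the degenerate case $x_0=0$ is appropriate; in the paper the lemma is only invoked with $x_0>0$ (in Lemma~\ref{infty} and Appendix~2), so a separate boundary analysis is not strictly needed here.
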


For any $C^1$ function $W$, write
\begin{eqnarray}
\label{I}
I_W(x)=M(W)(x)-(c+rx)W'(x).
\end{eqnarray}
We have the following Lemma (we refer to \cite{S1} for its proof):
\begin{lemma}\label{Valphax0}
For $x_0>0$ with $I_V(x_0)>0$,
there exists a twice continuously differentiable function $V_{\alpha,x_0}$
satisfying:
\begin{itemize}
\item[(i)]{it is of the form
\begin{eqnarray}
\label{Valpha}
V_{\alpha,x_0}(x)=\begin{cases}
V(x)&\ \ 0\leq x\leq x_0\\
\int_{x_0}^xv_{\alpha,x_0}(s)ds+V(x_0)&\ \ x>x_0
\end{cases},
\end{eqnarray} 
where $v_{\alpha,x_0}\in C^1[x_0,\infty)$;}
\item[(ii)]{it solves
\begin{equation}\label{eq-alpha}
\mathcal{L}^{(\alpha_{\delta}(x))}\delta(x)=0,
\end{equation}
on $(x_0,\infty)$;}
\item[(iii)]{$V_{\alpha,x_0}$ is concave on $(x_0,\infty)$ and it holds $I_{V_{\alpha,x_0}}>0$ on $(x_0,\infty)$.}
\end{itemize}
\end{lemma}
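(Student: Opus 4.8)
The plan is to read the required identity $\mathcal{L}^{(\alpha_\delta(x))}\delta(x)=0$ as a prescription for the second derivative and then to solve the resulting Volterra integro-differential equation by a fixed-point argument, the real work being an a priori bound that keeps $I_{V_{\alpha,x_0}}$ strictly positive. First I would reduce (ii) to a scalar equation. Substituting $\theta=\alpha_W(x)$ into $\mathcal{L}^{(\theta)}W(x)=0$ and using \eqref{alphaw} eliminates $\alpha_W$ and collapses the HJB relation to $W''(x)=\psi_W(x)$ with $\psi_W$ as in \eqref{fun1}; equivalently, writing $v_{\alpha,x_0}=V_{\alpha,x_0}'$ and $V_{\alpha,x_0}(x)=V(x_0)+\int_{x_0}^x v_{\alpha,x_0}(s)\,ds$ (so that $V_{\alpha,x_0}\equiv V$ on $[0,x_0]$ and the convolution in $M$ uses the already-known values of $V$ there together with the new ones), the problem becomes
$$v_{\alpha,x_0}'(x)=-\frac{(\mu-r)^2\,v_{\alpha,x_0}(x)^2}{2\sigma^2\,I_{V_{\alpha,x_0}}(x)},\qquad v_{\alpha,x_0}(x_0)=v(x_0),$$
with $I_{V_{\alpha,x_0}}$ given by \eqref{I}. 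Solving this while keeping $I_{V_{\alpha,x_0}}>0$ is exactly what (i)--(iii) demand.

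For local existence I would note that $I_V(x_0)>0$ by hypothesis and $v(x_0)=V'(x_0)>0$ force, by continuity, $I_{V_{\alpha,x_0}}>0$ on a right-neighbourhood of $x_0$; there the right-hand side above is smooth and Lipschitz in $v_{\alpha,x_0}$, the operator $M$ being a bounded Volterra operator exactly as in the proof that $T$ is Lipschitz. A Picard--Lindel\"of argument then yields a unique $C^1$ solution $v_{\alpha,x_0}$, hence a $C^2$ function $V_{\alpha,x_0}$ of the form \eqref{Valpha}. On the interval of existence $v_{\alpha,x_0}'=V_{\alpha,x_0}''<0$, which is the asserted concavity, and on each compact subinterval $I_{V_{\alpha,x_0}}$ is bounded below, so the scalar comparison $v_{\alpha,x_0}'\ge-\,\mathrm{const}\cdot v_{\alpha,x_0}^2$ keeps $v_{\alpha,x_0}>0$; thus $V_{\alpha,x_0}$ is increasing.

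The main obstacle is the global continuation: letting $[x_0,x_{\max})$ be the maximal interval on which the solution exists with $I_{V_{\alpha,x_0}}>0$, I must prove $x_{\max}=\infty$ and $I_{V_{\alpha,x_0}}>0$ throughout. Because the claims have support $(0,\infty)$, one has $F(x)<1$ for finite $x$, and since $V_{\alpha,x_0}$ is increasing one gets $M(V_{\alpha,x_0})(x)\ge\lambda V_{\alpha,x_0}(x)(1-F(x))>0$. The key consequence is that $I_{V_{\alpha,x_0}}=M(V_{\alpha,x_0})-(c+rx)v_{\alpha,x_0}$ cannot reach $0$ through a vanishing $v_{\alpha,x_0}$: wherever $I_{V_{\alpha,x_0}}\to0$ one must have $(c+rx)v_{\alpha,x_0}\to M(V_{\alpha,x_0})>0$, so $v_{\alpha,x_0}$ stays bounded away from $0$ there. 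Now if $x_{\max}<\infty$, then since $v_{\alpha,x_0}$ is decreasing and positive $V_{\alpha,x_0}$ grows at most linearly and cannot blow up, so maximality can only come from $I_{V_{\alpha,x_0}}\downarrow0$ at some $x_1\le x_{\max}$. At such a point $V_{\alpha,x_0}''=-(\mu-r)^2 v_{\alpha,x_0}^2/(2\sigma^2 I_{V_{\alpha,x_0}})\to-\infty$ with $v_{\alpha,x_0}(x_1)>0$, while differentiating \eqref{I} gives $I_{V_{\alpha,x_0}}'=M(V_{\alpha,x_0})'-r\,v_{\alpha,x_0}-(c+rx)V_{\alpha,x_0}''$, whose last term tends to $+\infty$ and dominates the others.

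Hence $I_{V_{\alpha,x_0}}'\to+\infty$, contradicting $I_{V_{\alpha,x_0}}\downarrow0$, which forces $x_{\max}=\infty$ and establishes (iii) at once, while (ii) holds by construction. The one computation to carry out with care is the boundedness of $M(V_{\alpha,x_0})'$ near $x_1$: using $V_{\alpha,x_0}(0)=1$ and the continuous, integrable density $f$, the derivative of the convolution term stays bounded, so that $-(c+rx)V_{\alpha,x_0}''$ is indeed the dominant, sign-definite term in $I_{V_{\alpha,x_0}}'$. I expect this positivity-preservation step to be the crux of the argument, the local existence and the monotonicity/concavity statements following routinely once the equation has been put in the scalar form above.
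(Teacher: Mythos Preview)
Your proposal is correct and follows essentially the approach that the paper defers to: note that the paper does not give its own proof of this lemma but writes ``we refer to \cite{S1} for its proof.'' What you have reconstructed---rewriting $\mathcal{L}^{(\alpha_\delta)}\delta=0$ as the scalar equation $v'=-\tfrac{(\mu-r)^2}{2\sigma^2}\,v^2/I_V$ for $v=V_{\alpha,x_0}'$, obtaining local existence from a Volterra/Picard argument, and then continuing globally by showing that $I_{V_{\alpha,x_0}}$ cannot vanish---is exactly the line of argument used in Schmidli's and Hipp--Plum's treatment of the unconstrained equation. Your blow-up exclusion (if $I\to 0$ then $v$ stays bounded away from $0$ since $M(V)\ge \lambda V(1-F)>0$, hence $V''\to-\infty$, hence $I'=M(V)'-rv-(c+rx)V''\to+\infty$, contradicting $I\downarrow 0$) is the correct mechanism; the observation that $v$ is monotone and $V,M(V)$ are continuous forces $\lim_{x\uparrow x_{\max}}I$ to exist, so you genuinely reduce to the single case $\lim I=0$. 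One small point you leave implicit: the ``twice continuously differentiable'' claim across $x_0$ requires $V''(x_0)=\psi_V(x_0)$, which holds precisely when the maximizer at $x_0$ is the vertex $\alpha_V(x_0)$; in every use of the lemma in the paper (e.g.\ Lemma~\ref{infty}, Appendix~2) $x_0$ is chosen so that $0<\phi_V(x_0)<a$, making this automatic.
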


Define sets
\begin{equation}
\label{S1S2}
\begin{split}
S_1=\{x:\phi_V(x)<a\},\\
S_2=\{x:\phi_V(x)>a\}.
\end{split}
\end{equation}

Now we obtain:
\begin{lemma}
\label{infty}
If $x_n^*<\infty$ for all $n\ge 1$, then
$\underset{n\rightarrow \infty}{\lim}x_n^*=\infty$.
\end{lemma}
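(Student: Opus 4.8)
\section*{Proof proposal for Lemma~\ref{infty}}

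The plan is to argue by contradiction: assume every $x_n^*$ is finite yet $x^*=\lim_n x_n^*<\infty$ (see \eqref{xstar}), and build a smooth continuation of $V$ past $x^*$ that clashes with the defining property of the $x_n^*$. The crucial piece of data is the boundary identity \eqref{in5}, which together with $V\equiv V^{(n)}$ on $[0,x_n^*]$ gives $|\phi_V(x_n^*)|=\beta/n\to 0$. First I would record that a left neighborhood of $x^*$ lies in the interior regime. Indeed, at each $x_n^*$ the minimizer equals the interior value $\alpha_V=\phi_V$, whose magnitude $\beta/n$ lies strictly inside $(-b,a)$, so by \eqref{alphastar} necessarily $V''(x_n^*)<0$; hence near $x^*$ the equation $\mathcal{L}^{(\alpha_V)}V=0$ holds, $\phi_V=\alpha_V$ and $\psi_V=V''$, and \eqref{fun1} yields $I_V=-(\mu-r)^2(V')^2/(2\sigma^2V'')>0$ with $V''<0$. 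In particular $I_V(x_0)>0$ for some $x_0<x^*$ as close to $x^*$ as we wish.

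Next I would continue $V$ smoothly across $x^*$ using the two quoted lemmas. Since $I_V(x_0)>0$, Lemma~\ref{Valphax0} furnishes a function $V_{\alpha,x_0}$ that is twice continuously differentiable on $(x_0,\infty)$, concave, solves \eqref{eq-alpha}, and satisfies $I_{V_{\alpha,x_0}}>0$ on all of $(x_0,\infty)$; in particular $I_{V_{\alpha,x_0}}(x^*)>0$. Because the interior proportion $\alpha_V$ stays in $(-b,a)$ near $x_0$, this $V_{\alpha,x_0}$ also solves the full HJB equation \eqref{HJB} there, so Lemma~\ref{UV} (with $u=V_{\alpha,x_0}'$, which agrees with $v=V'$ on $(0,x_0]$) forces $V_{\alpha,x_0}\equiv V$ on $(x_0,x^*)$. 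Consequently $\phi_{V_{\alpha,x_0}}(x_n^*)=\phi_V(x_n^*)$ for all large $n$, so $|\phi_{V_{\alpha,x_0}}(x_n^*)|=\beta/n\to 0$ along $x_n^*\uparrow x^*$.

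The contradiction then comes from the representation $\phi_W=2I_W/((\mu-r)xW')$ in \eqref{fun1}, applied to the smooth extension $W=V_{\alpha,x_0}$, distinguishing the sign of $V'(x^*)$. If $V'(x^*)>0$, then $\phi_{V_{\alpha,x_0}}$ is continuous at $x^*$, so $\phi_{V_{\alpha,x_0}}(x^*)=\lim_n\phi_{V_{\alpha,x_0}}(x_n^*)=0$, forcing $I_{V_{\alpha,x_0}}(x^*)=0$ and contradicting $I_{V_{\alpha,x_0}}(x^*)>0$. If instead $V'(x^*)=0$, then $I_{V_{\alpha,x_0}}(x_n^*)=I_V(x_n^*)\to M(V)(x^*)=\lambda[V(x^*)-\int_0^{x^*}V(x^*-s)\,dF(s)]>0$ while $V'(x_n^*)\to 0$, so the same formula gives $|\phi_{V_{\alpha,x_0}}(x_n^*)|\to\infty$, contradicting $\beta/n\to 0$. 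In either case we reach a contradiction, whence $x^*=\infty$.

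I expect the main obstacle to be the bookkeeping of the middle step: verifying that the continuation $V_{\alpha,x_0}$ coincides with $V$ on the whole of $(x_0,x^*)$, and not merely on the short interval delivered by Lemma~\ref{UV}. This reduces to checking that $|\phi_V|<\min(a,b)$ throughout $(x_0,x^*)$, so that \eqref{eq-alpha} really is the full HJB equation and Lemma~\ref{UV} can be propagated up to $x^*$; this holds because on each $(x_n^*,x_{n+1}^*]$ the value $|\phi_V|$ is pinned between the successive thresholds $\beta/(n+1)$ and $\beta/n\le\beta=\min(a,b)$ built into the definition \eqref{A} of $A(x)$. The only remaining care is the split on the sign of $V'(x^*)$, which is unavoidable since the representation of $\phi$ degenerates precisely where $V'$ vanishes.
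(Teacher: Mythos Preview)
Your endgame contradiction---combine $I_{V_{\alpha,x_0}}(x^*)>0$ from Lemma~\ref{Valphax0} with $\phi_V(x_n^*)\to 0$---is exactly the paper's, but the setup that gets you there has a real gap. The sentence ``at each $x_n^*$ the minimizer equals the interior value $\alpha_V=\phi_V$ \dots\ so by \eqref{alphastar} necessarily $V''(x_n^*)<0$'' is circular: formula \eqref{alphastar} computes the maximizer \emph{from} the sign of $V''$, not the other way around, and the identity $\alpha_V=\phi_V$ only holds once you already know $V$ solves $\mathcal{L}^{(\alpha_V)}V=0$ at that point. More importantly, your ``pinning'' claim in the last paragraph is false as stated. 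The definition of $x_{n+1}^*$ gives only the \emph{lower} bound $|\phi_V|\ge\beta/(n+1)$ on $(0,x_{n+1}^*)$; nothing in the construction prevents $|\phi_V|$ from rising above $a$ somewhere on $(x_n^*,x_{n+1}^*)$, in which case the HJB maximizer there is $a$ or $-b$, the equation $\mathcal{L}^{(\alpha_V)}V=0$ fails, and Lemma~\ref{UV} cannot be used to identify $V$ with $V_{\alpha,x_0}$ across that stretch. So you have not established that any left neighborhood of $x^*$ lies in the interior regime, and without this the propagation step collapses.

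This missing piece is precisely the hard part of the paper's argument, and it is not bookkeeping. For $\mu>r$ the paper shows, by a separate contradiction, that $x_1:=\inf\{s:(s,x^*)\subset S_1\}<x^*$: if no such left neighborhood existed one could find $y_n\uparrow x^*$ with $\phi_V(y_n)=a/2$, so that on $(y_n,x_n^*)$ one has $V=V_{\alpha,y_n}$ and $\phi_V$ drops from $a/2$ to $\beta/n$ on an interval of vanishing length; the mean value theorem then gives points $z_n$ with $\phi_V'(z_n)\to-\infty$, and a direct computation of $\phi_V'$ forces $V'(z_n)\to 0$, $V''(z_n)\to 0$, hence $M(V)(z_n)\to 0$ from the HJB equation, contradicting $M(V)>0$. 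Only after securing such a neighborhood does the paper choose $x_0$ with $0<\phi_V(x_0)<a$ (so $I_V(x_0)>0$), invoke Lemma~\ref{Valphax0}, and run the propagation/contradiction you outlined. If you supply this step your proposal goes through; as written it does not.
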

\begin{proof}
 Assuming 
$\underset{n\rightarrow \infty}{\lim}x_n^*=x^*<\infty$, we prove Lemma~\ref{infty} by contradiction.

For the case $\mu-r>0$, 
define $x_1=\inf\{s:(s,x^*)\subset S_1\}$, where $S_1$ is defined in \eqref{S1S2}; 
from Lemma~\ref{VVa} and Lemma~\ref{S10},
we have $x_1>0$; now we show $x_1<x^*$ by contradiction. 
Suppose such $x_1$ does not exist. Then there exists a sequence
$x'_n\to x^*$ such that $\lim \phi_V(x'_n)\ge a$.
Define $y_n=\sup\{x: x<x^*_n, \phi_V(x)>a/2\}$,
then $\lim y_n=x^*$.
Further one can show $V=V_{\alpha,y_n}$ on $[y_n, x_n^*]$
with $\phi_V(y_n)=a/2$ and $\phi_V(x^*_n)=\beta/n$.
So there exits $z_n\in(y_n,x^*_n)$ such that
$$\lim_{n\to \infty}\phi'_V(z_n)=-\infty.$$
Notice $$M(V)'(x)=\lambda[V'(x)-f(x)-\int_0^xV'(x-y)f(y)dy].$$ 
There exists $K>0$ such that $M(V)'(x)>-K$ on $(0,x^*)$ due to boundedness of $f$ and $V$.
Further notice
\begin{equation}
\phi_V'(x)=\frac{2}{\mu-r}\frac{xM(V)'(x)V'(x)-M(V)(x)(V'(x)+xV''(x))+cV'^2(x)}{x^2V'^2(x)}
\end{equation}
thus
\begin{equation}
\begin{split}
\phi_V'(z_n)
&>\frac{2}{\mu-r}\frac{(-z_nK-M(V)(z_n))V'(z_n)-z_nM(V)(z_n)V''(z_n)}{z_n^2V'^2(z_n)}\\
&=\frac{2}{\mu-r}\frac{-z_nK-M(V)(z_n)-z_nM(V)(z_n)V''(z_n)/V'(z_n)}{z_n^2V'(z_n)},
\end{split}
\end{equation}
which tends to $-\infty$. From $V''(z_n)=V''_{\alpha,y_n}(z_n)<0$ and boundedness of $M(V)$,
it must hold $\lim V'(z_n)=0$ and then $\lim V''(z_n)=0$. 
Since $V$ solves the HJB equation, it then holds $\lim M(V)(z_n)=0$. Contradiction!
 
Thus $x_1$ exits such that $0<x_1<x^*$.  It holds $\phi_V(x_1)=a$ by
continuity of $\phi_V$. Hence we can select $x_0\in(x_1,x^*)$ such that $0<\phi_V(x_0)<a$;
then we have $I_V(x_0)>0$ and from Lemma~\ref{Valphax0},
there exists a function $V_{\alpha,x_0}$ which is
concave and twice continuously differentiable on $(x_0,\infty)$,
such that $V_{\alpha,x_0}(s)=V(s)$ on $(0,x_0]$, and it solves
$$\mathcal{L}^{(\alpha_{V_{\alpha,x_0}}(x))}V_{\alpha,x_0}(x)=0,$$
on $(x_0,\infty)$. 
Now define 
$$x_2=\sup\{x: x\geq x_0; 0<\phi_{V_{\alpha,x_0}}(t)<a,\forall\ t\in [x_0, x)\};$$
noticing $\mu-r>0$, $V''_{\alpha,x_0}(s)<0$, and
 $I_{V_{\alpha,x_0}}(s)>0$ (by Lemma~\ref{Valphax0}), and then $\phi_{V_{\alpha,x_0}(s)>0}$,
 it holds  
\begin{equation*}
0<\alpha_{V_{\alpha,x_0}}(s)=\phi_{V_{\alpha,x_0}}(s)<a,
\end{equation*}
for $s\in (x_0,x_2)$, we then have 
$$\underset{u\in[-b,a]}{\sup}\mathcal{L}^{(u)}V_{\alpha,x_0}(s)=
\mathcal{L}^{(\alpha_{V_{\alpha,x_0}}(s))}V_{\alpha,x_0}(s)=0,$$
which is the HJB equation (the maximizer of quadratic function
$\mathcal{L}^{(u)}V_{\alpha,x_0}(s)$ in $u$ 
is the vertex $\alpha_{V_{\alpha,x_0}}(s)$). By Lemma~\ref{UV}, we conclude that 
$V_{\alpha,x_0}(s)=V(s)$ and $\phi_V(s)=\phi_{V_{\alpha,x_0}}(s)<a$
 on $(x_0,x_2\wedge x^*)$. 
 If $x_2\in(x_0,x^*)$, by continuity of $\phi_{V_{\alpha,x_0}}$
 and definition of $x_2$, 
we have $\phi_V(x_2)=\phi_{V_{\alpha,x_0}}(x_2)=a$, which contradicts to $x_2\in S_1$.
If $x_2\geq x^*$, we have $\phi_{V_{\alpha,x_0}}(x^*)\neq 0$ since 
$I_{V_{\alpha,x_0}}(x^*)>0$ by Lemma~\ref{Valphax0}, and it contradicts to
$$\phi_{V_{\alpha,x_0}}(x^*)=\underset{n\rightarrow \infty}{\lim}\phi_{V_{\alpha,x_0}}(x_n^*)
=\underset{n\rightarrow \infty}{\lim}\phi_{V}(x_n^*)
=\underset{n\rightarrow \infty}{\lim}\phi_{V^{(n)}}(x_n^*)=0.$$
This finishes the proof for the case $\mu-r>0$.
The case $\mu-r<0$ can be shown in a similar way. 
\end{proof}

By Lemma~\ref{infty} and the previous discussions, we have the following theorem:
\begin{theorem} Function $V(x)$ defined in \eqref{V}, 
with the function $A(x)$ constructed in \eqref{A}, 
is a twice continuously differentiable solution to HJB equation~\eqref{HJB} on $(0,\infty)$
 with the initial values
$V(0+)=1$, $V'(0+)=\lambda/c$, and $V''_{\gamma}(0+)=\frac{\lambda}c \left( \frac{\lambda}c -F'(0+)-\frac{r+\gamma(\mu-r)}c \right)$.
\end{theorem}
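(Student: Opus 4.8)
The plan is to synthesize the work preceding the statement; the theorem is essentially a packaging of the construction, with the one genuinely new structural ingredient being the reduction to $x^*=\infty$. Recall that the discussion following \eqref{V} already records that $V$ is twice continuously differentiable, solves the HJB equation~\eqref{HJB}, is bounded, and has positive derivative $v=V'$ on $(0,x^*)$, where $x^*$ is the limit in \eqref{xstar}. Thus the only claim left to prove is that $x^*=\infty$, after which the regularity and the HJB identity automatically hold on the whole half-line; the boundary data at the origin are then read off from the building block $V_\gamma$.

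First I would settle the initial values. On a right neighbourhood of $0$ the construction \eqref{DiffEq}, together with Lemma~\ref{VVa} (for $\mu>r$, $\gamma=a$) and Lemma~\ref{VVb} (for $\mu<r$, $\gamma=-b$), forces $V\equiv V_\gamma$, since $w(\varepsilon)=V'_\gamma(\varepsilon)$ and $V_\gamma$ already solves \eqref{HJB} on $(0,\varepsilon)$. The prescribed values $V(0+)=1$, $V'(0+)=\lambda/c$ and the stated $V''(0+)$ then follow immediately from \eqref{Vgamma} and \eqref{V'V''0} with the appropriate choice of $\gamma$.

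Next I would prove $x^*=\infty$ by splitting into two cases. If $x_N^*=\infty$ for some $N$, then by the paragraph following \eqref{in5} the function $V^{(N)}$ is $C^2$ and solves \eqref{HJB} on $(0,x_N^*)=(0,\infty)$; moreover, since $V^{(N)}=V^{(n)}$ wherever the former is defined, one checks $x_n^*=\infty$ for all $n\ge N$, so $x^*=\infty$ and $V=V^{(N)}$ is already the desired global solution. Otherwise $x_n^*<\infty$ for every $n\ge 1$, and Lemma~\ref{infty} yields directly $x^*=\lim_n x_n^*=\infty$. Either way $V$ is a twice continuously differentiable solution of \eqref{HJB} on $(0,\infty)$.

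The hard part is not in this final assembly but is concentrated in Lemma~\ref{infty}, whose argument rules out the degeneration of the interior maximizer $\phi_V$ below every threshold $\beta/n$ at a finite surplus level; a finite $x^*$ would force $\phi_V(x^*)=0$, contradicting $I_{V_{\alpha,x_0}}(x^*)>0$ supplied by Lemma~\ref{Valphax0}. The remaining subtlety I would keep an eye on is the consistent gluing of the pieces $V^{(n)}$: the monotonicity $x_n^*\le x_{n+1}^*$ and the identity $V=V^{(n)}$ on $[0,x_n^*]$ (both noted after \eqref{in5}) guarantee that the global $V$ is unambiguously defined and inherits $C^2$ regularity across the thresholds $x_n^*$, even though the active branch of the maximizer in \eqref{alphastar} may switch there while the identity $\sup_{\theta\in\mathcal{U}}\mathcal{L}^{(\theta)}V=0$ is preserved by construction.
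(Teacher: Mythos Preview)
Your proposal is correct and matches the paper's own approach: the theorem is presented there with only the one-line justification ``By Lemma~\ref{infty} and the previous discussions,'' and your write-up spells out exactly that synthesis, including the trivial case $x_N^*=\infty$ for some $N$ that the paper leaves implicit. The initial values are indeed inherited from $V_\gamma$ on $(0,\varepsilon)$ as you say, and the only substantive ingredient beyond the construction is Lemma~\ref{infty}, which you correctly identify as carrying the real work.
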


In the following, we show some properties on the 
interplay between the optimal policies and parameters.
These theorems are stated via four parameter cases.
We give the proof of Theorem~\ref{case2} in
Appendix 2 and omit proofs the others which are similar.
\begin{theorem}\label{case1}
If $\mu>r$ and $a\geq b$, then 
\begin{itemize}
\item[(i)]{$V$ solves
HJB equation
\begin{equation}\label{HJBalpha}
\underset{\theta\in[-b,a]}{\sup}\mathcal{L}^{(\theta)}\delta(x)=\mathcal{L}^{(\alpha_{\delta}(x))}\delta(x)=0,
\end{equation}
on $S_1$ and HJB equation~\eqref{HJBa} on $S_2$;}
\item[(ii)]{the associated maximizer in the HJB equation is given by
\begin{eqnarray}\label{optc-1}
\alpha_V^*(x)=\begin{cases}\alpha_V(x)=\phi_V(x)&\ if\ \ \phi_V(x)<a\\
a&\ if\ \ \phi_V(x)\geq a\end{cases},
\end{eqnarray}
where it always holds $\phi_V(x)>0$.}
\end{itemize}
\end{theorem}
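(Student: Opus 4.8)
The plan is to leverage the already-established facts that $V$ is a $C^2$ solution of \eqref{HJB} on all of $(0,\infty)$ with $v=V'>0$, and to convert the generic maximizer description \eqref{alphastar} into the sharp dichotomy recorded by $S_1,S_2$ in \eqref{S1S2}. Everything is read off the quadratic $\theta\mapsto\mathcal{L}^{(\theta)}V(x)$, whose leading coefficient is $\tfrac12\sigma^2x^2V''(x)$, whose linear coefficient $(\mu-r)xV'(x)$ is strictly positive, and whose constant term is $-I_V(x)$ with $I_V$ from \eqref{I}.

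First I would settle the two sign statements. Because $a,b>0$ the point $\theta=0$ lies in the interior of $\mathcal{U}$, yet the maximizer in \eqref{alphastar} is always $-b$, $a$, or the vertex $\alpha_V(x)$, and $\alpha_V(x)=0$ is impossible since $V'>0$; hence $0$ is never a maximizer, so $\mathcal{L}^{(0)}V(x)=-I_V(x)<0$, giving $I_V(x)>0$ and therefore $\phi_V(x)>0$ throughout (recall $\mu>r$). This is the final clause of (ii). Next I would exclude $\theta=-b$, which is precisely where $a\ge b$ is used: if $V''(x)<0$ the vertex $\alpha_V(x)=-(\mu-r)V'(x)/(\sigma^2xV''(x))$ is positive, hence exceeds $-b$; if $V''(x)>0$ then $\alpha_V(x)<0\le\tfrac{a-b}{2}$, and if $V''(x)=0$ then $\mu>r$, so in these cases \eqref{alphastar} selects $a$. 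Thus the maximizer is always either the interior vertex $\alpha_V(x)$ (possible only when $V''(x)<0$) or the endpoint $a$.

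The crux is the algebraic bridge between $\phi_V$ and which of these alternatives holds. If $V$ solves $\mathcal{L}^{(\alpha_V(x))}V(x)=0$, the identity stated just after \eqref{fun1} gives $\phi_V(x)=\alpha_V(x)$, and since the interior branch of \eqref{alphastar} requires $\alpha_V(x)\le a$, we get $0<\phi_V(x)\le a$. If instead the maximizer is $a$, I would substitute $\mathcal{L}^{(a)}V(x)=0$ into the definition \eqref{fun1} of $\phi_V$; a short computation gives $\phi_V(x)=a\bigl(2-a/\alpha_V(x)\bigr)$ when $V''(x)\ne0$ and $\phi_V(x)=2a$ when $V''(x)=0$. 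In the concave subcase $\alpha_V(x)>a>0$, in the convex subcase $\alpha_V(x)<0$, and in the linear subcase directly, the result is in every instance strictly larger than $a$. This yields the two implications: maximizer $=\alpha_V(x)\Rightarrow\phi_V(x)\le a$, and maximizer $=a\Rightarrow\phi_V(x)>a$.

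Assembling these gives (i) and the first branch of (ii) by contraposition. On $S_1=\{\phi_V<a\}$ neither $-b$ nor $a$ can be the maximizer (the latter would force $\phi_V>a$), so it is the interior vertex; hence $V$ solves \eqref{HJBalpha} and $\alpha_V^*=\alpha_V=\phi_V$. On $S_2=\{\phi_V>a\}$ the interior vertex is excluded (it would force $\phi_V\le a$), so the maximizer is $a$ and $V$ solves \eqref{HJBa}. At the remaining boundary value $\phi_V(x)=a$ the vertex and the endpoint coincide at $a$, so both descriptions agree, producing exactly the formula \eqref{optc-1}. I expect the main obstacle to be the bookkeeping of this third step: checking that all three subcases of ``maximizer $=a$'' genuinely give $\phi_V>a$ under the substitution, and verifying the case split is exhaustive and faithful to \eqref{alphastar} so that the $-b$ branch, killed by $a\ge b$ through $\tfrac{a-b}{2}\ge0$, never re-enters.
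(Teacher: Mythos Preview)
Your argument is correct and considerably more direct than the paper's. The paper does not prove Theorem~\ref{case1} explicitly; it appeals to the method of Appendix~2 (the proof of Theorem~\ref{case2}), which proceeds constructively: for each $x$ in a given region it picks a nearby base point $x_0$, builds an auxiliary solution $V_{\alpha,x_0}$ or $V_{\gamma,x_0}$ of the relevant reduced equation via Lemmas~\ref{Valphax0} and~\ref{Vgammax0}, checks that this auxiliary function in fact solves the full HJB equation on a subinterval, and then invokes the uniqueness Lemma~\ref{UV} to identify it with $V$. You bypass all of this machinery: since $V$ is already known to solve the HJB equation globally, you simply determine pointwise which branch of the quadratic maximization is active, using the algebraic identity $\phi_V(x)=a\bigl(2-a/\alpha_V(x)\bigr)$ when $\mathcal{L}^{(a)}V(x)=0$ (and $\phi_V(x)=2a$ when $V''(x)=0$). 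This is cleaner and avoids the auxiliary constructions entirely; the paper's route has the modest advantage of fitting into a uniform template that also covers the three-region situation of Theorem~\ref{case2}, but your algebraic method would extend there as well with one further case split.

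One small wording issue in your first step: the claim that ``$0$ is never a maximizer, so $\mathcal{L}^{(0)}V(x)<0$'' relies on $0$ not being among \emph{any} of the maximizers, whereas \eqref{alphastar} only exhibits \emph{a} maximizer and does not assert uniqueness. The patch is immediate: since the linear coefficient $(\mu-r)xV'(x)$ is strictly positive, the map $\theta\mapsto\mathcal{L}^{(\theta)}V(x)$ is either strictly increasing at $\theta=0$ (linear or convex case) or a concave parabola with vertex $\alpha_V(x)\neq 0$, and in each case the value at $0$ is strictly below the maximum on $[-b,a]$.
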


For $a\neq b$, define sets:
\begin{equation}
\label{S2ab}
\begin{split}
S_2^a&=\{x:a<\phi_V(x)<\frac{2ab}{b-a}\},\\
S_2^b&=\{x:\phi_V(x)>\frac{2ab}{b-a}\}.
\end{split}
\end{equation}
Then we have:
\begin{theorem}\label{case2}
If $\mu>r$ and $a<b$, then
\begin{itemize}
\item[(i)]{$V$ solves equation \eqref{HJBalpha}
on $S_1$, equation~\eqref{HJBa} on $S_2^a$, and equation~\eqref{HJBb} on $S_2^b$;}
\item[(ii)]{the associated maximizer in the HJB equation is given by 
\begin{eqnarray}\label{optc-2}
\alpha_V^*(x)=\begin{cases}\alpha_V(x)=\phi_V(x)&\ if\ \ \phi_V(x)<a\\
a&\ if\ \ a\leq \phi_V(x)\leq\frac{2ab}{b-a}\\
-b&\ if\ \ \phi_V(x)>\frac{2ab}{b-a}\end{cases},
\end{eqnarray}
where it always holds $\phi_V(x)>0$.}
\end{itemize}
\end{theorem}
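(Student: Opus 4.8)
The plan is to leverage the fact, already established in the existence theorem above, that $V$ solves the HJB equation \eqref{HJB} at every $x>0$; consequently at each point the supremum in \eqref{HJB} is attained at one of the three candidate controls $\{-b,\,a,\,\alpha_V(x)\}$ listed in \eqref{alphastar}. The entire content of the theorem is then to decide \emph{which} candidate is selected, and to do so purely through the size of the auxiliary quantity $\phi_V(x)$ from \eqref{fun1}, which depends only on $V$ and $V'$ (via $M(V)$ and $V'$) and not on $V''$ or on $\alpha_V$. This is what makes the statement usable, since $\phi_V$ is pre-computable while $\alpha_V$ and $V''$ presuppose knowing the answer.

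First I would pin down the sign of $\phi_V$, which also settles the last assertion of (ii). Since $0\in\mathcal U$ and the supremum in \eqref{HJB} equals $0$, we have $\mathcal L^{(0)}V(x)=-I_V(x)\le 0$ with $I_V$ from \eqref{I}, hence $I_V(x)\ge 0$; combined with the construction bound $|\phi_V(x)|>A(x)>0$ this forces $I_V(x)>0$, and therefore $\phi_V(x)=2I_V(x)/[(\mu-r)xV'(x)]>0$ because $\mu>r$ and $V'>0$. Thus it suffices to analyse $\phi_V\in(0,\infty)$.

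The core computation converts the endpoint tests of \eqref{alphastar} into thresholds on $\phi_V$. Writing $\theta^*=\alpha_V^*(x)$ for the selected maximizer and using $\mathcal L^{(\theta^*)}V(x)=0$ together with $I_V=\tfrac12(\mu-r)xV'(x)\phi_V(x)$, one solves for $V''$ and obtains
$$V''(x)=\frac{(\mu-r)V'(x)\,(\phi_V(x)-2\theta^*)}{\sigma^2 x\,(\theta^*)^2},$$
so that $\operatorname{sign}V''(x)=\operatorname{sign}(\phi_V(x)-2\theta^*)$. Substituting this into $\alpha_V(x)=-(\mu-r)V'(x)/[\sigma^2 xV''(x)]$ gives, in the two endpoint cases, the explicit relations $\alpha_V(x)=a^2/(2a-\phi_V(x))$ when $\theta^*=a$ and $\alpha_V(x)=-b^2/(\phi_V(x)+2b)$ when $\theta^*=-b$. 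I would then feed these into \eqref{alphastar}: for $\theta^*=a$ the admissibility conditions ($V''<0$ with $\alpha_V\ge a$, or $V''=0$, or $V''>0$ with $\alpha_V\le\frac{a-b}{2}$) reduce by elementary algebra, using $a<b$, precisely to $a\le\phi_V(x)\le\frac{2ab}{b-a}$; for $\theta^*=-b$ the condition $V''>0$, $\alpha_V>\frac{a-b}{2}$ reduces to $\phi_V(x)>\frac{2ab}{b-a}$; and the interior case $\theta^*=\alpha_V$ (the vertex) forces the identity $\phi_V=\alpha_V<a$ stated after \eqref{fun1}, with $V''<0$. The threshold $\frac{2ab}{b-a}$ is exactly the value of $\phi_V$ at which $\alpha_V=\frac{a-b}{2}$, i.e.\ the value at which the two boundary controls $a$ and $-b$ yield the same value of $\mathcal L^{(\cdot)}V$.

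Finally I would assemble the three forward implications. They send the three possible maximizers into the mutually disjoint and, on $(0,\infty)$, exhaustive ranges $(0,a)$, $[a,\frac{2ab}{b-a}]$, $(\frac{2ab}{b-a},\infty)$ of $\phi_V$. Since at each $x$ exactly one candidate is optimal and $\phi_V(x)$ lies in exactly one range, the implications reverse, which is \eqref{optc-2}; this in turn matches the equations \eqref{HJBalpha}, \eqref{HJBa}, \eqref{HJBb} to the sets $S_1$, $S_2^a$, $S_2^b$ of \eqref{S1S2}--\eqref{S2ab} and yields (i), with $\alpha_V=\phi_V$ on $S_1$ because the selected control there is the vertex. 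The positivity of $\phi_V$ from the second paragraph, together with Lemma~\ref{VVa} and Lemma~\ref{S10} for the behaviour near $0$, guarantees the ranges cover the whole domain. The main obstacle is the apparent circularity --- the selection rule \eqref{alphastar} is phrased through $V''$ and $\alpha_V$, whose values depend on which equation $V$ actually solves --- and the device that breaks it is precisely the identity for $V''$ above, which re-expresses every admissibility test through the single pre-computable quantity $\phi_V$ evaluated against the finite list of candidate controls.
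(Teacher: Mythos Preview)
Your argument is correct and takes a genuinely different route from the paper's own proof. The paper argues region by region: for a point $x$ in (say) $S_2^a$ it picks a nearby $x_0$, invokes Lemma~\ref{Vgammax0} to manufacture a solution $V_{a,x_0}$ of $\mathcal L^{(a)}\delta=0$ that agrees with $V$ on $[0,x_0]$, checks that $V_{a,x_0}$ actually satisfies the full HJB on a right neighbourhood of $x_0$, and then invokes the uniqueness Lemma~\ref{UV} to conclude $V=V_{a,x_0}$ there; the cases $S_1$ and $S_2^b$ are handled the same way via Lemmas~\ref{Valphax0} and~\ref{Vgammax0}. Your proof bypasses all three auxiliary lemmas: you exploit directly that $V$ already solves the HJB (so $\mathcal L^{(\theta^*)}V(x)=0$ for the selected $\theta^*$), derive the closed formula $V''(x)=(\mu-r)V'(x)(\phi_V(x)-2\theta^*)/(\sigma^2 x(\theta^*)^2)$, and translate each branch of the selection rule \eqref{alphastar} into an explicit interval for $\phi_V$. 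The positivity step ($\mathcal L^{(0)}V\le 0$ gives $I_V\ge 0$, and $|\phi_V|>A>0$ upgrades this to $I_V>0$) is cleaner than anything the paper makes explicit, and your observation that the threshold $2ab/(b-a)$ is exactly where $\alpha_V=(a-b)/2$, i.e.\ where the two boundary controls tie, explains the constant appearing in \eqref{S2ab}. What the paper's approach buys is a constructive identification of $V$ with the locally defined $V_{\alpha,x_0}$ or $V_{\gamma,x_0}$, which feeds into other parts of the argument (e.g.\ Lemma~\ref{infty}); what your approach buys is a short pointwise verification that needs nothing beyond the existence theorem and elementary algebra.
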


Define sets
\begin{equation}
\label{S3S4}
\begin{split}
S_3&=\{x:\phi_V(x)>-b\},\\
S_4&=\{x:\phi_V(x)<-b\},\\
S_4^b&=\{x:-\frac{2ab}{a-b}<\phi_V(x)<-b\},\\
S_4^a&=\{x:\phi_V(x)<-\frac{2ab}{a-b}\}.
\end{split}
\end{equation}
We then obtain the following two theorems for the case $\mu<r$:
\begin{theorem}\label{case3}
If $\mu<r$ and $a\leq b$, then 
\begin{itemize}
\item[(i)]{$V$ solves the equation \eqref{HJBalpha}
on set $S_3$, equation~\eqref{HJBb} on set $S_4$;}
\item[(ii)]{the associated maximizer in the HJB equation  is given by
\begin{eqnarray}\label{optc-3}
\alpha_V^*(x)=\begin{cases}\alpha_V(x)=\phi_V(x)&\ if\ \ \phi_V(x)>-b\\
-b&\ if\ \ \phi_V(x)\leq -b\end{cases},
\end{eqnarray}
where $\phi_V(x)<0$.}
\end{itemize}
\end{theorem}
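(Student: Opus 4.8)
The plan is to mirror the reasoning behind Theorem~\ref{case1} and the proof of Theorem~\ref{case2} given in Appendix~2, exploiting that $V$ is already known to be a twice continuously differentiable solution of~\eqref{HJB} on $(0,\infty)$ and that, at every $x$, the maximizer in~\eqref{HJB} is one of the three points $-b,\ a,\ \alpha_V(x)$ recorded in~\eqref{alphastar}. The task is therefore purely to decide, for each $x$, which of these three is active when $\mu<r$ and $a\le b$, and to phrase the answer through the sign of $\phi_V(x)+b$ so that it matches~\eqref{S3S4} and~\eqref{optc-3}. Throughout I write $g(\theta)=\mathcal{L}^{(\theta)}V(x)$, a quadratic in $\theta$ with leading coefficient $A=\tfrac12\sigma^2x^2V''(x)$, linear coefficient $B=(\mu-r)xV'(x)$, and constant term $-I_V(x)$, where $I_V$ is as in~\eqref{I}; recall $v=V'>0$.

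First I would establish that $\phi_V(x)<0$ on all of $(0,\infty)$. Since $0\in[-b,a]$ and $\sup_\theta g(\theta)=0$, evaluating at $\theta=0$ gives $g(0)=-I_V(x)\le 0$, i.e.\ $I_V(x)\ge 0$; a short case check (a downward parabola, an upward parabola, or a line all force the maximum to exceed $0$ when $I_V(x)=0$, since $\theta=0$ can never itself be the maximizer as $V'>0$ and $\mu\ne r$) upgrades this to $I_V(x)>0$. Because $\mu-r<0$ and $xV'>0$, the definition~\eqref{fun1} then yields $\phi_V(x)<0$. Next I would rule out $a$ as a maximizer: with $\mu<r$ one has $\alpha_V(x)<0$ whenever $V''(x)<0$ and $\alpha_V(x)>0$ whenever $V''(x)>0$, while $a\le b$ gives $\tfrac{a-b}{2}\le 0$; feeding these facts into~\eqref{alphastar} eliminates every branch that would select $a$. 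Hence the maximizer is always either the interior vertex $\alpha_V(x)$ or the boundary value $-b$.

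The heart of the proof is to match these two regimes to $S_3$ and $S_4$. If the maximizer is $\alpha_V(x)$ then $V$ solves $\mathcal{L}^{(\alpha_V(x))}V(x)=0$, so by the identity noted after~\eqref{fun1} we have $\phi_V(x)=\alpha_V(x)\in[-b,0)$, placing $x$ in $\overline{S_3}$ and giving~\eqref{HJBalpha}. If instead the maximizer is $-b$, then $g(-b)=0$ determines $I_V(x)$ explicitly; substituting back into~\eqref{fun1} yields a formula of the form $\phi_V(x)=-2b+2Ab^2/B$, from which the algebraic equivalence $\alpha_V(x)<-b\iff\phi_V(x)<-b$ in the concave branch, together with $\phi_V(x)\le -2b$ in the convex and linear branches, shows $x\in S_4$ and gives~\eqref{HJBb}. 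Since the two descriptions are mutually exclusive and exhaust $(0,\infty)$ up to the harmless set $\{\phi_V=-b\}$ (where vertex and endpoint coincide), the maximizer is $\alpha_V=\phi_V$ exactly on $S_3$ and $-b$ exactly on $S_4$, which is precisely~\eqref{optc-3}.

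The main obstacle is the boundary computation in the preceding paragraph: one must track the sign of $V''(x)$ through the definition of $\phi_V$ and verify that all three branches of~\eqref{alphastar} producing $-b$ land in $S_4$, the concave branch requiring the non-obvious equivalence $\alpha_V(x)<-b\iff\phi_V(x)<-b$ rather than a direct estimate. As an independent check I would add a remark using the reflection $\theta\mapsto-\theta$: it sends the constraint $[-b,a]$ to $[-a,b]$, replaces $\mu-r$ by $r-\mu>0$, swaps the roles of $a$ and $b$, and turns $\phi_V$ into $-\phi_V$, thereby transforming the present case ($\mu<r$, $a\le b$) into the already-treated case ($\mu>r$, $a\ge b$) of Theorem~\ref{case1}; the conclusions~\eqref{HJBalpha}, \eqref{HJBb} and~\eqref{optc-3} are then exactly the images of those of Theorem~\ref{case1} under this reflection.
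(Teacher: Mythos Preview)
Your argument is correct, and it is genuinely different from the paper's intended route. The paper omits the proof of Theorem~\ref{case3}, pointing instead to the proof of Theorem~\ref{case2} in Appendix~2; that proof is constructive: for each $x$ in the relevant set one picks a nearby base point $x_0$, builds an auxiliary function $V_{\alpha,x_0}$ (via Lemma~\ref{Valphax0}) or $V_{\gamma,x_0}$ (via Lemma~\ref{Vgammax0}), checks that this auxiliary function solves the HJB equation on a neighborhood, and then invokes the uniqueness Lemma~\ref{UV} to conclude it coincides with $V$ there. Your approach instead works \emph{directly} from Theorem~3.1 (that $V$ already solves~\eqref{HJB}) and the classification~\eqref{alphastar}: you read off the maximizer of the quadratic $g(\theta)=\mathcal{L}^{(\theta)}V(x)$ by sign analysis of $A=\tfrac12\sigma^2x^2V''(x)$ and $B=(\mu-r)xV'(x)$, eliminate $a$ using $\mu<r$ and $a\le b$, and then translate the dichotomy $\alpha_V$ vs.\ $-b$ into the $\phi_V$-criteria via the identity $\phi_V=2Ab^2/B-2b$ at $-b$-active points. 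This is more elementary (no auxiliary existence lemmas, no uniqueness argument) and shorter; the paper's method, by contrast, is closer in spirit to the construction of $V$ itself and perhaps generalizes more uniformly across the four cases. Your closing reflection $\theta\mapsto-\theta$ reducing the statement to Theorem~\ref{case1} is also valid and is a nice structural observation the paper does not make explicit.
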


\begin{theorem}\label{case4}
If $\mu<r$ and $a>b$, then 
\begin{itemize}
\item[(i)]{$V$ solves the equation \eqref{HJBalpha}
on set $S_3$, equation~\eqref{HJBa} on set $S_4^a$, and equation~\eqref{HJBb} on $S_4^b$;}
\item[(ii)]{the associated maximizer in the HJB equation is given by 
\begin{eqnarray}\label{optc-4}
\alpha_V^*(x)=\begin{cases}\alpha_V(x)=\phi_V(x)&\ if\ \ \phi_V(x)>-b\\
-b&\ if\ \ \ -b\geq \phi_V(x)\geq-\frac{2ab}{a-b}\\
a&\ if\ \ \phi_V(x)<-\frac{2ab}{a-b}\\
\end{cases},
\end{eqnarray}
where $\phi_V(x)<0$.}
\end{itemize}
\end{theorem}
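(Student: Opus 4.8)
The plan is to use that, by Lemma~\ref{infty} and the Theorem establishing a global smooth solution, $V\in C^2(0,\infty)$ solves \eqref{HJB} on all of $(0,\infty)$ with $v=V'>0$. Consequently, at every $x>0$ the supremum in \eqref{HJB} is attained, the maximizer $\alpha^*_V(x)$ is one of the three points $\{-b,\,a,\,\alpha_V(x)\}$ listed in \eqref{alphastar}, and $\mathcal{L}^{(\alpha^*_V(x))}V(x)=0$. Because $\mu<r$, I would first record from \eqref{alphaw} the sign facts $\alpha_V(x)<0$ when $V''(x)<0$ and $\alpha_V(x)>0$ when $V''(x)>0$. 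I note that the whole statement is the mirror image of Theorem~\ref{case2} under the reflection $\theta\mapsto-\theta,\ a\leftrightarrow b,\ \mu-r\mapsto r-\mu$; I would nonetheless argue directly, exactly as in the proof of Theorem~\ref{case2}.

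The first step is the sign claim $\phi_V(x)<0$. Writing $Q(\theta)=\mathcal{L}^{(\theta)}V(x)$, the HJB equation gives $Q(\theta)\le 0=\sup_\theta Q(\theta)$ on $[-b,a]$. Since $0$ is an interior point of $[-b,a]$, if $Q(0)=0$ then $0$ would be an interior maximizer and hence $Q'(0)=0$; but $Q'(0)=(\mu-r)xV'(x)\neq 0$. Therefore $Q(0)=-I_V(x)<0$, so $I_V(x)>0$ with $I_V$ as in \eqref{I}, and from \eqref{fun1} together with $\mu-r<0$, $x>0$, $V'(x)>0$ we get $\phi_V(x)=2I_V(x)/[(\mu-r)xV'(x)]<0$.

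The core step is to translate the location of the maximizer into the interval containing $\phi_V(x)$. Using $\mathcal{L}^{(\alpha^*_V)}V=0$ to eliminate $M(V)$ from $I_V=M(V)-(c+rx)V'$ and then substituting $\alpha_V=-(\mu-r)V'/(\sigma^2xV'')$, I obtain $\phi_V=\alpha_V$ on the interior branch (the identity already recorded after \eqref{fun1}), $\phi_V=2a-a^2/\alpha_V$ when $\alpha^*_V=a$, and $\phi_V=-2b-b^2/\alpha_V$ when $\alpha^*_V=-b$. Feeding in the selection rule \eqref{alphastar} and the sign facts then pins down three disjoint ranges: the interior branch ($V''<0$, $-b\le\alpha_V<0$) gives $\phi_V=\alpha_V\in[-b,0)$; the branch $\alpha^*_V=-b$ (arising from $V''<0,\ \alpha_V<-b$, or $V''=0$, or $V''>0,\ \alpha_V>\frac{a-b}{2}$) gives, by monotonicity of $t\mapsto-2b-b^2/t$, the range $\phi_V\in(-\frac{2ab}{a-b},-b)$, passing continuously through $\phi_V=-2b$ where $V''=0$; and the branch $\alpha^*_V=a$ ($V''>0$, $0<\alpha_V\le\frac{a-b}{2}$) gives $\phi_V\in(-\infty,-\frac{2ab}{a-b}]$. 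The shared endpoint $-\frac{2ab}{a-b}$ is exactly the tie value $\alpha_V=\frac{a-b}{2}$ at which $\mathcal{L}^{(a)}V=\mathcal{L}^{(-b)}V$. Since these three ranges are disjoint and exhaust $(-\infty,0)$, and since $\alpha^*_V(x)$ is always one of the three types while $\phi_V$ is continuous, the implications invert: $\phi_V(x)>-b$ forces the interior branch $\alpha^*_V=\alpha_V=\phi_V$, $-\frac{2ab}{a-b}<\phi_V(x)<-b$ forces $\alpha^*_V=-b$, and $\phi_V(x)<-\frac{2ab}{a-b}$ forces $\alpha^*_V=a$. This is \eqref{optc-4} and, with the sets of \eqref{S3S4}, it shows $V$ solves \eqref{HJBalpha} on $S_3$, \eqref{HJBb} on $S_4^b$, and \eqref{HJBa} on $S_4^a$, proving both (i) and (ii).

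The main obstacle is precisely this inversion: establishing that the three maximizer branches map onto three ranges of $\phi_V$ that share exactly the endpoints $-b$ and $-\frac{2ab}{a-b}$ and together cover $(-\infty,0)$, so that the value of $\phi_V(x)$ alone determines the maximizer. The delicate points are the counterintuitive region $S_4^a$, where the company buys the stock at the maximal rate $a$ even though $\mu<r$ and which genuinely requires $V''>0$, and the continuous passage of the $\alpha^*_V=-b$ branch through $V''=0$ (where $\alpha_V$ is undefined) at the value $\phi_V=-2b\in(-\frac{2ab}{a-b},-b)$. Everything else — the monotonicity of the two auxiliary maps and the continuity of $\phi_V$ inherited from $V\in C^2$ with $V'>0$ — is routine.
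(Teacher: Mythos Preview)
Your argument is correct and is genuinely different from the paper's. The paper omits the proof of Theorem~\ref{case4}, referring to the analogous proof of Theorem~\ref{case2} in Appendix~2. That proof is \emph{constructive}: for each $x$ lying in one of the sets $S_1,S_2^a,S_2^b$ (resp.\ $S_3,S_4^b,S_4^a$), the paper picks a nearby point $x_0$, invokes Lemma~\ref{Valphax0} or Lemma~\ref{Vgammax0} to build an explicit function $V_{\alpha,x_0}$ or $V_{\gamma,x_0}$ solving the corresponding single equation on $(x_0,\infty)$, checks that this function in fact solves the full HJB equation on a neighborhood of $x$, and then appeals to the uniqueness Lemma~\ref{UV} to identify it with $V$. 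Your route bypasses all three auxiliary lemmas: you work pointwise, use $\mathcal{L}^{(\alpha^*_V)}V=0$ to eliminate $M(V)$ and obtain the closed formulas $\phi_V=\alpha_V$, $\phi_V=2a-a^2/\alpha_V$, $\phi_V=-2b-b^2/\alpha_V$ on the three branches, and then read off that the resulting ranges $[-b,0)$, $(-\tfrac{2ab}{a-b},-b)$, $(-\infty,-\tfrac{2ab}{a-b}]$ partition $(-\infty,0)$, so the branch is determined by $\phi_V$ alone. Your opening step, showing $I_V(x)>0$ via the observation that $\theta=0$ is an interior point of $[-b,a]$ with $Q'(0)=(\mu-r)xV'(x)\neq 0$, is cleaner than the corresponding positivity statements the paper extracts from Lemma~\ref{Valphax0}(iii). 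The paper's approach, on the other hand, is more in keeping with the operator machinery of Section~3 and yields the side information that on $S_3$ the solution coincides with the concave $V_{\alpha,x_0}$ of Lemma~\ref{Valphax0}; your argument never needs concavity, only the sign of $V''$ dictated by \eqref{alphastar}. The minor convention mismatch at the tie point $\phi_V=-\tfrac{2ab}{a-b}$ (where \eqref{alphastar} selects $a$ but \eqref{optc-4} selects $-b$) is harmless, as you note, since $\mathcal{L}^{(a)}V=\mathcal{L}^{(-b)}V$ there.
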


\begin{remark}
By the verification Theorem ~\ref{VER}, the function $V(x)$ is bounded and proportional to the maximal survival function $\delta$, in fact, $\delta(x)=V(x)/V(\infty)$.
\end{remark}

\begin{remark}
Noticing $|\alpha^*_V(x)|>A(x)$, the optimal policy always involves investment when $\mu\neq r$.
\end{remark}

\begin{remark}
In contrast to our case, when there are no constraints on the investment as in \cite{HP}, the optimal investment policy involves no shortselling of the risky asset although it is allowed.
\end{remark}

\begin{remark}\label{mu-r}
If $\mu=r$, the optimal investment strategy $\alpha_V^*$ equals $0$, $a$ or $-b$.
\end{remark}
\section{A Verification Theorem}
In this section, we prove a verification result; that is, we
show that the solution $V$ to the HJB equation is a multiple
of the maximal survival probability function.
The first lemma below shows that ruin is never caused by investment
if investment strategies are constant at low surplus levels.
\begin{lemma}\label{noruin}
For any non-negative integer $n$ and an admissible control policy $\pi$ such
that when $X^\pi_t$ is small $u_t\equiv a$ if $\mu>r$ or $u_t\equiv-b$ if $\mu<r$,
it holds $$P(\tau^\pi<\tau_{n+1}|\tau^\pi>\tau_n)=0,$$ where $\tau^\pi$ is defined in
~\eqref{ruintime}, and  $\tau_0=0$, $\tau_1$, $\tau_2$,... are
the times of claim arrivals.
\end{lemma}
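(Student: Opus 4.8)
The plan is to show that, between two consecutive claim arrivals, the surplus diffuses without ever touching the level $0$, the key point being that the positive premium drift $c$ dominates near the origin while the diffusion coefficient degenerates there. Fix $n$ and work on the event $\{\tau^\pi>\tau_n\}$. By the strong Markov property applied at the claim time $\tau_n$, it suffices to analyse the process on the inter-arrival interval $[\tau_n,\tau_{n+1})$, started from $X^\pi_{\tau_n}$; since the claim distribution has a density, $X^\pi_{\tau_n}>0$ almost surely on $\{\tau^\pi>\tau_n\}$. On $[\tau_n,\tau_{n+1})$ there are no jumps, so $X^\pi$ evolves as the continuous It\^o diffusion
$$dX_t=[c+rX_t+(\mu-r)\theta_tX_t]\,dt+\sigma\theta_tX_t\,dB_t.$$
By hypothesis there is a threshold $\varepsilon_0>0$ such that $\theta_t\equiv\bar\theta$ whenever $X_t<\varepsilon_0$, where $\bar\theta=a$ if $\mu>r$ and $\bar\theta=-b$ if $\mu<r$; note $\bar\theta^2>0$ since $a>0$ and $b\neq0$. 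Writing $\beta=r+(\mu-r)\bar\theta$, the dynamics on $\{X_t<\varepsilon_0\}$ reduce to $dX_t=(c+\beta X_t)\,dt+\sigma\bar\theta X_t\,dB_t$, and the asserted identity $P(\tau^\pi<\tau_{n+1}\mid\tau^\pi>\tau_n)=0$ amounts to showing that this diffusion almost surely does not reach $0$ before $\tau_{n+1}$.

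To capture the degeneracy at the origin I would use the Lyapunov function $g(x)=\log(1/x)$ on $(0,\varepsilon_0)$, whose generator under the reduced dynamics is
$$\mathcal{A}g(x)=(c+\beta x)\left(-\tfrac1x\right)+\tfrac12\sigma^2\bar\theta^2x^2\cdot\tfrac1{x^2}=-\frac cx-\beta+\tfrac12\sigma^2\bar\theta^2.$$
Because $c>0$, the term $-c/x$ dominates as $x\to0$; hence I may shrink $\varepsilon_0$ (keeping $\varepsilon_0<1$, so that $g>0$ on $(0,\varepsilon_0)$, and $\varepsilon_0<c/(|\beta|+\tfrac12\sigma^2\bar\theta^2)$) to guarantee $\mathcal{A}g(x)<0$ for all $x\in(0,\varepsilon_0)$. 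Consequently $g(X_{t})$ is a non-negative supermartingale while $X$ stays in $(0,\varepsilon_0)$. Equivalently, the scale function $s$ of the reduced diffusion satisfies $s'(x)\propto x^{-2\beta/(\sigma^2\bar\theta^2)}\exp\{2c/(\sigma^2\bar\theta^2x)\}$, and the factor $\exp\{2c/(\sigma^2\bar\theta^2x)\}$ forces $s(0+)=-\infty$, the analytic signature of an inaccessible boundary.

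For the hitting estimate, take $0<\delta<x_1<\varepsilon_0$ and let $T_\delta,T_{\varepsilon_0}$ be the hitting times of $\delta$ and $\varepsilon_0$. Optional stopping of the bounded non-negative supermartingale $g(X_{\cdot\wedge T_\delta\wedge T_{\varepsilon_0}})$, started at $x_1$, gives $g(x_1)\ge g(\delta)\,P(T_\delta<T_{\varepsilon_0})$, so
$$P(T_\delta<T_{\varepsilon_0})\le\frac{\log(1/x_1)}{\log(1/\delta)},$$
which tends to $0$ as $\delta\to0$. Since $\{T_0<T_{\varepsilon_0}\}\subset\{T_\delta<T_{\varepsilon_0}\}$ for every $\delta$, we obtain $P(T_0<T_{\varepsilon_0})=0$: from any level below $\varepsilon_0$ the process returns to $\varepsilon_0$ before reaching $0$, almost surely. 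Finally, the origin can only be approached through $(0,\varepsilon_0)$, so a path reaching $0$ on $[\tau_n,\tau_{n+1})$ must do so during one of the countably many excursions of $X$ below $\varepsilon_0$; applying the strong Markov property at the successive hitting times of $\varepsilon_0$ and summing the (zero) probabilities yields $P(T_0<\tau_{n+1})=0$, which is the assertion.

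I expect the generator computation and the single-excursion estimate to be routine; the delicate step is the last passage from ``$X$ cannot reach $0$ before returning to $\varepsilon_0$'' to ``$X$ never reaches $0$ on the whole interval.'' This requires organising the excursions below $\varepsilon_0$ into a countable family via the strong Markov property and ruling out that their endpoints accumulate so as to let $0$ be approached in the limit; the non-negative supermartingale structure of $g(X_t)$ (equivalently, $s(0+)=-\infty$) is exactly what makes the union bound over these excursions vanish.
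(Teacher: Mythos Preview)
The paper does not supply its own proof of this lemma; it is stated and the reader is referred to \cite{AM} and \cite{S1} (the sentence ``For proofs of the lemmas we refer readers to \cite{AM} and \cite{S1}'' covers this and the two subsequent lemmas). Your argument must therefore be judged on its own merits, and it is essentially sound: the scale-function computation showing $s(0+)=-\infty$ (equivalently, the supermartingale property of $\log(1/X_t)$ near the origin, driven by the dominant $-c/x$ term in the generator) is precisely the Feller boundary criterion that renders $0$ inaccessible for the reduced diffusion $dX_t=(c+\beta X_t)\,dt+\sigma\bar\theta X_t\,dB_t$. The generator calculation and the optional-stopping bound $P(T_\delta<T_{\varepsilon_0})\le \log(1/x_1)/\log(1/\delta)$ are both correct.

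The one place that needs tightening is the excursion passage you yourself flag. As written, ``the countably many excursions below $\varepsilon_0$'' is problematic because the set of times at which $X$ equals $\varepsilon_0$ need not be discrete (the diffusion coefficient is nondegenerate there). A clean fix is to insert a buffer level: set $\sigma_0=\tau_n$ (or $\rho_0=\tau_n$ if $X_{\tau_n}<\varepsilon_0$) and inductively let $\rho_k$ be the first hit of $\varepsilon_0/2$ after $\sigma_{k-1}$, and $\sigma_k$ the first hit of $\varepsilon_0$ after $\rho_k$. On each interval $[\rho_k,\sigma_k]$ the path stays in $(0,\varepsilon_0)$, hence follows the reduced dynamics, and by your estimate does not hit $0$; the increments $\sigma_k-\rho_k$ are i.i.d.\ strictly positive by the strong Markov property, so either some $\rho_k=\infty$ or $\sigma_k\to\infty$. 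In either case any first hitting time of $0$ before $\tau_{n+1}$ would have to lie in some $[\rho_k,\sigma_k]$, an event of probability zero. With this refinement your proof is complete.
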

Next we state an ergodicity result (Lemma~\ref{l1}) of the controlled surplus process
and non-triviality (Lemma~\ref{l2}) of the optimization.
For proofs of the lemmas we refer readers to \cite{AM} and \cite{S1}.
\begin{lemma}\label{l1}
For any admissible control policy $\pi$, the surplus process $X_t^\pi$ either
diverges to infinity or drops below $0$ with probability $1$.
\end{lemma}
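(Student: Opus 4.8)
The plan is to prove the stated dichotomy by showing that the only leftover event has probability zero. For a fixed admissible $\pi$, write $B=\{\tau^\pi=\infty\}\cap\{X^\pi_t\not\to\infty\}$; on the complement of $B$ the process either ruins or diverges, so it suffices to prove $P(B)=0$. Since $X^\pi_t\ge 0$ for all $t$ on $\{\tau^\pi=\infty\}$, on $B$ we have $\liminf_{t\to\infty}X^\pi_t=:L<\infty$, and because the sample paths are continuous between the claim times $\tau_n$ and have only downward jumps, this forces $X^\pi_t\le M$ for arbitrarily large $t$ with $M:=L+1$. Thus on $B$ the process returns to the band $[0,M]$ infinitely often, and the whole point is to show that infinitely many such visits force ruin.

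The key quantitative input I would establish first is a uniform lower bound on the ruin probability started from a bounded state: there exist $T>0$ and $q>0$, depending only on $M$ and the model parameters, with $P^\pi_x(\tau^\pi\le T)\ge q$ for every $x\in[0,M]$ and every admissible $\pi$. To obtain this I would combine three ingredients, exploiting that the Poisson clock $N$ and the claim sizes $\{Y_i\}$ are independent of $B$. First, the first claim arrives before time $T$ with probability $1-e^{-\lambda T}>0$. Second, I would bound the pre-claim surplus: between claims $X^\pi$ solves $dX=[c+(r+(\mu-r)\theta_t)X]\,dt+\sigma\theta_t X\,dB_t$, and since $\theta_t\in[-b,a]$ both the drift coefficient $r+(\mu-r)\theta_t$ and the diffusion coefficient $\sigma\theta_t$ are bounded uniformly in $\theta_t$; Gronwall estimates on the first and second moments of $X^\pi$ together with Doob's maximal inequality then give $E^\pi_x[\sup_{t\le T}X^\pi_t]\le C(M+1)$ with $C$ independent of $\pi$, whence $P^\pi_x(\sup_{t\le T}X^\pi_t\le M')\ge 1/2$ for $M':=2C(M+1)$. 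Third, because $F$ has support $(0,\infty)$, a single claim exceeds $M'$ with probability $p:=P(Y>M')>0$; on the intersection of these three events the pre-claim surplus is at most $M'$ and the arriving claim wipes it out, so $q:=(1-e^{-\lambda T})\cdot\tfrac12\cdot p$ works.

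With this uniform bound in hand I would finish by a conditional Borel--Cantelli argument. Define stopping times $\sigma_1=\inf\{t:X^\pi_t\le M\}$ and $\sigma_{k+1}=\inf\{t\ge\sigma_k+T:X^\pi_t\le M\}$; on $B$ each $\sigma_k$ is finite and $X^\pi_{\sigma_k}\le M$. Since the increments of $N$ and $B$ and the claim sizes after $\sigma_k$ are independent of $\mathcal{F}_{\sigma_k}$, the post-$\sigma_k$ evolution is again a controlled Cram\'er--Lundberg process started from a state in $[0,M]$ under an admissible control, so the uniform bound yields $P(\tau^\pi\le\sigma_k+T\mid\mathcal{F}_{\sigma_k})\ge q$ on $\{\sigma_k<\infty\}$. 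The events that ruin occurs during $[\sigma_k,\sigma_k+T]$ thus each have conditional probability at least $q$ given the past, and by L\'evy's extension of the second Borel--Cantelli lemma at least one of them occurs almost surely on $\{\sigma_k<\infty\ \forall k\}$, which contains $B$; this contradicts $\tau^\pi=\infty$ and gives $P(B)=0$. The main obstacle is the uniform lower bound of the middle step, i.e. controlling the diffusive growth of the surplus over a fixed horizon simultaneously for all admissible strategies; once the linearity of the coefficients is exploited through Gronwall's inequality this becomes routine, and the independence bookkeeping and the Borel--Cantelli conclusion are standard.
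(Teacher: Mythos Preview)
The paper does not actually prove this lemma: immediately after stating it (together with Lemma~\ref{l2}) the authors write ``For proofs of the lemmas we refer readers to \cite{AM} and \cite{S1}.'' Your argument is therefore not competing against anything in the paper itself, and the approach you outline---a uniform-in-$\pi$ lower bound on the probability of ruin within a fixed time horizon starting from a bounded state, followed by a conditional Borel--Cantelli/strong Markov iteration---is precisely the standard route used in the references the paper cites (see in particular Schmidli~\cite{S1}). So your proposal is in line with what the paper defers to.

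Two points deserve tightening. First, your level $M=L+1$ is \emph{random}, while the constants $T,q,M'$ in your uniform bound depend on $M$; as written the Borel--Cantelli step mixes random and deterministic thresholds. The usual repair is to write $B=\bigcup_{n\ge 1}B_n$ with $B_n=B\cap\{\liminf_{t\to\infty}X^\pi_t<n\}$ and run your argument on each $B_n$ with the deterministic level $M=n$. Second, the product $q=(1-e^{-\lambda T})\cdot\tfrac12\cdot p$ tacitly assumes independence of the events $\{\tau_1\le T\}$, $\{\sup_{t\le T}X^\pi_t\le M'\}$, and $\{Y_1>M'\}$; the last is independent of the first two, but the first two are not independent of each other. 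A clean way around this is to bound
\[
P\big(\tau_1\le T,\ X^\pi_{\tau_1-}\le M'\big)\ \ge\ P(\tau_1\le T)-P\Big(\sup_{t\le T\wedge\tau_1}X^\pi_t>M'\Big),
\]
and then control the second term by Markov's inequality and your Gronwall moment bound (which only uses boundedness of the coefficients, hence holds uniformly over admissible controls). With these adjustments your argument goes through.
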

\begin{lemma}\label{l2}
There exists a control policy $\pi$
 (e.g., a suitable constant investment strategy),
such that $P(\tau^\pi=\infty)>0$.
\end{lemma}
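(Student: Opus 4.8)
The plan is to exhibit one explicit admissible constant strategy and bound its survival probability from below. I would take $\theta_t\equiv 0$, which is admissible since $0\in[-b,a]$; under this choice the risky asset and the Brownian term drop out of \eqref{dyn}, and the controlled surplus solves the pure interest-bearing risk equation $X_t=x+\int_0^t(c+rX_s)\,ds-\sum_{i=1}^{N(t)}Y_i$. The reason to favour $\theta\equiv 0$ over $\theta\equiv a$ or $\theta\equiv -b$ is that it removes the diffusion, leaving a process that grows deterministically at exponential rate $r>0$ between claims; this is exactly what makes the survival estimate transparent and explains the parenthetical ``suitable constant investment strategy'' in the statement.

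First I would pass to the discounted surplus $Z_t=e^{-rt}X_t$. Differentiating and substituting the equation above gives the closed form
\begin{equation*}
Z_t=x+\frac{c}{r}\bigl(1-e^{-rt}\bigr)-\sum_{\tau_i\le t}e^{-r\tau_i}Y_i,
\end{equation*}
where $\tau_1,\tau_2,\dots$ are the claim arrival times. Since $e^{-rt}>0$, ruin under $\theta\equiv 0$ is exactly the event $\{Z_t<0 \text{ for some }t\}$, and because the premium term $\tfrac{c}{r}(1-e^{-rt})$ is nonnegative we obtain the pathwise lower bound $Z_t\ge x-W$ for all $t$, where
\begin{equation*}
W:=\sum_{i=1}^{\infty}e^{-r\tau_i}Y_i.
\end{equation*}
Consequently $\{W\le x\}\subseteq\{\tau^\pi=\infty\}$, so it suffices to show that $P(W\le x)>0$.

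The key quantitative step is to control $W$. Using the Campbell/Mecke formula for the Poisson arrival times together with independence of the marks $Y_i$ and the finite-mean assumption on $F$,
\begin{equation*}
E[W]=E[Y_1]\,\lambda\int_0^{\infty}e^{-rt}\,dt=\frac{\lambda\,E[Y_1]}{r}<\infty,
\end{equation*}
so $W$ is finite almost surely with finite mean. Markov's inequality then yields $P(W\le x)\ge 1-\tfrac{\lambda E[Y_1]}{rx}>0$ for every $x>\lambda E[Y_1]/r$, and since the survival probability is nondecreasing in $x$ this already produces a policy and an initial surplus with $P(\tau^\pi=\infty)>0$. If positivity for \emph{every} $x>0$ is wanted, one notes that $W\ge 0$ can be made arbitrarily small with positive probability---e.g.\ conditioning on no claim before a large time $T$ gives $E[W\mid N(T)=0]=e^{-rT}\lambda E[Y_1]/r\to 0$---so $P(W\le x)>0$ for all $x>0$.

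The only genuine obstacle is the bookkeeping behind finiteness of $E[W]$: one must justify interchanging expectation with the infinite sum and invoke the compensator $\lambda\,dt$ of the Poisson process, which is routine given the finite-mean hypothesis. I would in fact prefer to close the argument through Lemma~\ref{l1}: on $\{W<x\}$ the lower bound $Z_t\ge x-W>0$ rules out ruin, so the ergodic dichotomy of Lemma~\ref{l1} forces $X_t\to\infty$ on that event, whence the divergent alternative has positive probability and $P(\tau^\pi=\infty)>0$. This routing avoids any separate treatment of the strict inequality and ties the non-triviality directly to the dichotomy already established.
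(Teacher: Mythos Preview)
The paper does not actually prove this lemma; it simply refers the reader to \cite{AM} and \cite{S1}. Your argument is a correct, fully self-contained proof. The choice $\theta\equiv 0$ is admissible because $0\in[-b,a]$ under the standing assumption $a,b>0$; the discounted-surplus identity $Z_t=x+\tfrac{c}{r}(1-e^{-rt})-\sum_{\tau_i\le t}e^{-r\tau_i}Y_i$ is correct; and the Campbell computation $E[W]=\lambda E[Y_1]/r<\infty$ is valid under the paper's hypotheses $r>0$ and $E[Y_1]<\infty$. The conditioning on $\{N(T)=0\}$ to push positivity down to every $x>0$ is also fine. Your final routing through Lemma~\ref{l1} is unnecessary---on $\{W<x\}$ you already have $Z_t>0$ for all $t$, hence $\tau^\pi=\infty$ directly---but it does no harm. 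Since the paper defers entirely to outside references, your write-up in fact supplies more than the paper does at this point; the approaches in \cite{AM} and \cite{S1} are similar in spirit (constant strategies and comparison with a classical risk process), so there is no substantive divergence to flag.
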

Now we prove the verification theorem:
\begin{theorem}\label{VER}
Suppose $g$ is a positive, increasing, and
twice continuously differentiable function on $[0,\infty)$;
and it solves the HJB equation~\eqref{HJB}. Then $g$ is bounded and
the maximal survival probability function is given by $\delta(x)=g(x)/g(\infty)$.
Moreover, the associated optimal investment strategy is $\pi^*=\{u^*(t)\}_{t\geq 0}$,
where $u^*(t)=\alpha_{\delta}^*(X^*_{t-})$, and $X^*_t$ is the surplus at time $t$ under
the control policy $\pi^*$.
\end{theorem}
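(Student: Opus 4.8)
The plan is to run the standard verification argument for controlled jump--diffusions, organized around the observation that the operator $\mathcal{L}^{(\theta)}$ in \eqref{LM} is exactly the infinitesimal generator of the surplus process \eqref{dyn} killed at ruin: the term $-M(g)(x)=-\lambda[g(x)-\int_0^x g(x-s)dF(s)]$ already assigns value $0$ to any post-claim position below zero, so ruin-by-claim automatically carries value $0$. First I would fix an arbitrary admissible $\pi\in\Pi$ and apply It\^o's formula for semimartingales to $g(X^\pi_t)$ up to the ruin time $\tau^\pi$. Localizing by $\tau_k=\inf\{t:X^\pi_t\ge k\}$ confines the process to the compact set $[0,k]$, on which $g,g',g''$ are bounded, so the $dB_t$--integral and the compensated jump integral are genuine martingales on $[0,t\wedge\tau^\pi\wedge\tau_k]$. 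Since $g$ solves \eqref{HJB} we have $\mathcal{L}^{(\theta)}g\le 0$ for every $\theta\in\mathcal{U}$, whence
$$E\big[g(X^\pi_{t\wedge\tau^\pi\wedge\tau_k})\big]=g(x)+E\Big[\int_0^{t\wedge\tau^\pi\wedge\tau_k}\mathcal{L}^{(\theta_s)}g(X^\pi_s)\,ds\Big]\le g(x).$$

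Boundedness of $g$ I would establish by contradiction. If $g$ were unbounded then, being increasing, $g(\infty)=\lim_{x\to\infty}g(x)=\infty$. By Lemma~\ref{l2} there is an admissible $\pi$ with $P(\tau^\pi=\infty)>0$; applying the displayed bound to this $\pi$, letting $k\to\infty$ and $t\to\infty$, and invoking Fatou together with the dichotomy of Lemma~\ref{l1} (on the survival event $X^\pi_t\to\infty$, so $g(X^\pi_t)\to\infty$) would force $E[\liminf_t g(X^\pi_{t\wedge\tau^\pi})]=\infty$, contradicting $\le g(x)<\infty$. Hence $g(\infty)<\infty$ and $g(X^\pi_{t\wedge\tau^\pi})$ is uniformly bounded.

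With boundedness in hand I would pass to the limit for a general $\pi$. Letting $k\to\infty$ (the linear growth of the coefficients in \eqref{dyn} rules out explosion in finite time) and then $t\to\infty$, the bounded convergence theorem and Lemma~\ref{l1} give $g(X^\pi_{t\wedge\tau^\pi})\to g(\infty)\mathbf{1}_{\{\tau^\pi=\infty\}}$, where the killed value at ruin is $0$ and Lemma~\ref{noruin} guarantees that ruin can be triggered only by a claim, so the continuous part never reaches the boundary and this limit is unambiguous. Thus $g(\infty)\,\delta^\pi(x)\le g(x)$, i.e. $\delta^\pi(x)\le g(x)/g(\infty)$ for every admissible $\pi$, and taking the supremum yields $\delta(x)\le g(x)/g(\infty)$.

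For attainment I would repeat the computation for the feedback control $\pi^*$ with $u^*(t)=\alpha^*_\delta(X^*_{t-})$. Because $\alpha_W$ in \eqref{alphaw}, and hence the maximizer $\alpha^*_W$ in \eqref{alphastar}, is invariant under rescaling $W\mapsto cW$, we have $\alpha^*_\delta=\alpha^*_g$, so $\pi^*$ is well defined in terms of $g$ and admissible (the feedback map is $\mathcal{U}$--valued and measurable, so \eqref{dyn} admits a controlled solution $X^*$). For this control the maximizer attains equality, $\mathcal{L}^{(\alpha^*_g(X^*_s))}g(X^*_s)=0$, the localized process is a true martingale, $E[g(X^*_{t\wedge\tau^*\wedge\tau_k})]=g(x)$, and the same limit passage gives $g(\infty)\,\delta^{\pi^*}(x)=g(x)$, i.e. $\delta^{\pi^*}(x)=g(x)/g(\infty)\ge\delta(x)$. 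Combining the two inequalities proves $\delta(x)=g(x)/g(\infty)$ and the optimality of $\pi^*$. I expect the main obstacle to be the rigorous limit interchange: removing the localization $\tau_k$ and then sending $t\to\infty$ while correctly identifying the boundary contributions on $\{\tau^\pi<\infty\}$ and $\{\tau^\pi=\infty\}$. This is precisely where boundedness of $g$, the no-explosion of \eqref{dyn}, and Lemmas~\ref{noruin}, \ref{l1}, \ref{l2} must be combined with care; the remainder is the routine generator computation.
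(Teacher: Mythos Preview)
Your argument is the paper's verification proof: It\^o for the jump--diffusion, localization, the supermartingale inequality for a general $\pi$ and the martingale equality for $\pi^*$, then Lemmas~\ref{noruin}--\ref{l2} to identify limits and deduce boundedness of $g$. Two points deserve tightening against the paper's version.

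First, the paper does not apply It\^o to $g$ directly but to a smooth extension $g_\epsilon$ that is increasing and $C^2$ on $\mathbb{R}$, with $g_\epsilon\equiv 0$ on $(-\infty,-\epsilon)$ and $g_\epsilon=g$ on $[0,\infty)$. Your implicit ``extend $g$ by $0$ on the negatives'' is discontinuous at $0$ (since $g(0)>0$), so the standard It\^o formula for semimartingales with jumps does not apply as written; this is exactly the boundary subtlety you flag at the end. With $g_\epsilon$ the formula is clean, and in the equality direction the paper removes the residual term $g(0)P(-\epsilon<X^*_{t\wedge\tau_M}<0)$ by letting $\epsilon\to 0$, using that the claim density is continuous.

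Second, Lemma~\ref{noruin} is stated only for policies that are constant (equal to $a$ or $-b$) when the surplus is small; it therefore applies to $\pi^*$ (by Lemmas~\ref{VVa}--\ref{VVb}) but not to an arbitrary admissible $\pi$ as you invoke it. For the upper bound this is harmless: any nonnegative contribution on the ruin event, together with Fatou and Lemma~\ref{l1}, already gives $g(\infty)P(\tau^\pi=\infty)\le g(x)$, so no statement about how ruin occurs is needed there. The paper uses Lemma~\ref{noruin} only in the equality step for $\pi^*$, to rule out $P(X^*_{t\wedge\tau_M}=0)>0$.
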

\begin{proof}
For any $\epsilon>0$, we extend $g$ to $g_\epsilon$ such that $g_\epsilon$ is increasing and
twice continuously differentiable on $(-\infty,\infty)$ with $g_\epsilon(x)=0$ on
$(-\infty,-\epsilon)$ and $g_\epsilon(x)=g(x)$ on $[0,\infty)$.
For any admissible control $\pi$ and positive constant $M$,
define exit time
$$\tau^\pi_M=\inf\{t\geq 0: X^\pi_t\notin(0,M)\}.$$
Write $\tau^*_M=\tau^{\pi^*}_M$ to denote the first exit time from interval $(0,M)$
of the surplus process under the optimal control policy $\pi^*$.
By Ito's Lemma (See \cite{CT}), we have
\begin{equation}\label{eq2}
\begin{split}
g_\epsilon(X_{t\wedge\tau_M^*})
=&g(x)+\int_0^{t\wedge\tau_M^*}\{[c+rX_s^*+(\mu-r)\alpha_g^*(X^*_s)X^*_s]g'(X_s^*)\\
&+\frac{\sigma^2}{2}[\alpha_g^*(X_s^*)]^2(X_s^*)^2g''(X_s^*)\}ds
+\int_0^{t\wedge\tau_M^*}\sigma \alpha_g^*(X_s^*)g'(X_s^*)dW_s\\
&+\underset{s\leq t\wedge\tau_M^*,X_s^*<X_{s-}^*}{\sum}[g_\epsilon(X_s^*)-g(X_{s-}^*)]\\
=&g(x)+\int_0^{t\wedge\tau_M^*}\mathcal{L}^{(\alpha^*_{g}(X_s^*))}g(X_s^*)ds
+M^{(1)}_{t}+M^{(2)}_{t},
\end{split}
\end{equation}
where
\begin{eqnarray*}
M^{(1)}_t&=&\int_0^{t\wedge\tau^*_M}\sigma \alpha_g^*(X_s^*)g'(X_s^*)dB_s;\\
M^{(2)}_{t}&=&\underset{s\leq t\wedge\tau_M^*,X_s^*<X_{s-}^*}{\sum}[g_\epsilon(X_s^*)-g(X_{s-}^*)]\\
&&+\lambda\int_0^{t\wedge\tau_M^*}[g(X_s^*)-E(g(X_s^*-Y))]ds.
\end{eqnarray*}
Since $g$ solves the HJB equation, we have
$\mathcal{L}^{(\alpha^*_g(X_s^*))}g(X_s^*)=0$;
further noticing that $\alpha_g^*(X_s^*)g'(X_s^*)$
is bounded and  $M^{(1)}_t$, $M^{(2)}_t$ are both martingales,
we take expectation on both sides of \eqref{eq2} and obtain
\begin{eqnarray}\label{eq3}
E[g_\epsilon(X_{t\wedge\tau_M^*})]=g(x).
\end{eqnarray}
Similarly, for any admissible control $\pi=\{u_t\}_{t\geq 0}$, noticing
$\mathcal{L}^{(u_s)}g(X_s^\pi)\leq 0$, it holds
\begin{eqnarray}
E[g_\epsilon(X_{t\wedge\tau_M^\pi})]\leq g(x).
\end{eqnarray}
Letting $t\rightarrow\infty$ and then $M\rightarrow\infty$, by Fatou's Lemma, we obtain
\begin{eqnarray}
E[g_\epsilon(X_{\tau^\pi})]\leq g(x).
\end{eqnarray}
Further noticing by Lemma~\ref{l1} that
$$E[g_\epsilon(X_{\tau^\pi})]\geq g(\infty)P(\tau^\pi=\infty),$$
from Lemma~\ref{l2},
we must have that $g(\infty)$ is finite and that
\begin{eqnarray}\label{ineq1}
P(\tau^\pi=\infty)\leq\frac{g(x)}{g(\infty)}.
\end{eqnarray}
On the other hand, from \eqref{eq3}, we have
\begin{eqnarray*}
g(x)&=&E[g_\epsilon(X^*_{t\wedge\tau^M})]\\
&=& E(g(X^*_{t\wedge\tau^M})I_{\{X^*_{t\wedge\tau^M}>0\}})
+g(0)P(X^*_{t\wedge\tau^M}=0)\\
&+&E(g_\epsilon(X^*_{t\wedge\tau^M})I_{\{-\epsilon<X^*_{t\wedge\tau^M}<0\}})\\
&\leq& g(\infty)P(X^*_{t\wedge\tau^M}>0)+g(0)P(-\epsilon<X^*_{t\wedge\tau^M}<0),
\end{eqnarray*}
where in the last inequality we used $P(X^*_{t\wedge\tau^M}=0)=0$ which holds by Lemma~\ref{noruin}.
Notice $P(-\epsilon<X^*_{t\wedge\tau^M}<0)\rightarrow 0$ as $\epsilon\rightarrow 0$
since the claim distribution has a continuous density.
Letting $t\rightarrow\infty$ and then $M\rightarrow\infty$, we obtain
$$g(x)\leq g(\infty)P(\tau^*=\infty);$$
from \eqref{ineq1}, we then have $\frac{g(x)}{g(\infty)}=P(\tau^*=\infty)$, 
which is the maximal survival probability.
The proof is completed.
\end{proof}

\section{Analysis of the Case of Exponential Claims and Numerical Examples }

We will analyze  a specific case  in which  the claim size $Y$ has an exponential distribution with
$EY=m$ and the claims arrive with  intensity $\lambda$. We give several examples. 
We note that in Example 1, the parameters satisfy
\begin{equation}\label{cond}
m[{a}\mu +(1-{a})r-\lambda]+c<0,
\end{equation}
which is equivalent to $V_a''(0+)>0$ (all the parameters will be specified later).
This relation will ensure some nontrivial investment policy, which switches from
maximal long position to maximal shortselling and again to maximal long position when the
surplus increases. 
Before we introduce the examples we will state an auxiliary result needed for the analysis.

\begin{lemma}\label{linearinteg}
Let $F(x)=1 -e^{-x/m}$ be the distribution function of an exponential
random variable  with parameter $1/m$.  Then any bounded at 0 solutions to the integro-differential equation of the second order
\begin{equation}
\label{linear} \lambda\int_0^x V(x-z)\,dF(z)-\lambda
V(x) +V'(x)[\bar{\mu} x
+c]+\frac12\bar{\sigma}^2x^2V''(x)=0,
\end{equation}
with the initial condition
\begin{equation} \label{nachalo}
\lambda V(0)=cV'(0).
\end{equation}
is also a  solution to the linear differential equation of the third order
\begin{equation}\label{diffur}
x^{2}V^{\prime\prime\prime}(x)+ \left[\frac{x^2}{m} +
2\left(1+\frac{\bar{\mu}}{\bar{\sigma}^{2}}\right)\, x + \frac{2c}{\bar{\sigma}^{2}}
\right]V^{\prime\prime}(x)+
 2\left[\frac{\bar{\mu}x}{m\bar{\sigma}^2} + \frac{\bar{\mu} -\lambda
+c/m}{\bar{\sigma}^2}\right]V^\prime(x)=0.
\end{equation}
with the initial conditions (\ref{nachalo}) and  (\ref{diffur_cond}) below
\begin{equation}\label{diffur_cond}
\lim_{x\to +0} [c V^{\prime\prime}(x)+(\bar{\mu}-\lambda +c/m)
V^\prime (x)]=0.
\end{equation}
\end{lemma}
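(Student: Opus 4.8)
The plan is to use the memoryless structure of the exponential claim distribution to eliminate the nonlocal convolution term and thereby reduce the second-order integro-differential equation~\eqref{linear} to the third-order \emph{ordinary} differential equation~\eqref{diffur}. Write $\Phi(x)=\lambda\int_0^x V(x-z)\,dF(z)$ for the integral term; after the substitution $w=x-z$ together with $dF(z)=\frac{1}{m}e^{-z/m}\,dz$ this becomes $\Phi(x)=\frac{\lambda}{m}e^{-x/m}\int_0^x V(w)e^{w/m}\,dw$, and differentiating under the integral sign yields the first-order relation $m\Phi'(x)+\Phi(x)=\lambda V(x)$, with $\Phi(0)=0$. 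This identity is the whole reason for restricting to exponential claims: it trades the nonlocal operator $\Phi$ for a purely local differential relation.

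Next I would set $H(x)=(\bar\mu x+c)V'(x)+\frac12\bar\sigma^2 x^2 V''(x)$, so that~\eqref{linear} reads simply $\Phi(x)=\lambda V(x)-H(x)$. Feeding this into $m\Phi'+\Phi=\lambda V$ gives $m\Phi'=H$, while differentiating $\Phi=\lambda V-H$ gives $\Phi'=\lambda V'-H'$; combining the two produces the clean first-order equation $mH'(x)+H(x)=m\lambda V'(x)$. Expanding $H$ and $H'$ here, collecting the coefficients of $V'$, $V''$ and $V'''$, and dividing through by $\frac12 m\bar\sigma^2$ reproduces exactly~\eqref{diffur}; this is a mechanical computation. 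Before differentiating I would first record that the solution is smooth enough to do so: rewriting~\eqref{linear} as $\frac12\bar\sigma^2 x^2 V''(x)=\lambda V(x)-\Phi(x)-(\bar\mu x+c)V'(x)$ shows that the right-hand side is $C^1$ on $(0,\infty)$ (since $\Phi$ is the integral of a continuous function), whence $x^2V''$ and hence $V''$ is $C^1$ there, so $V\in C^3(0,\infty)$ and all differentiations are legitimate.

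For the initial conditions, \eqref{nachalo} is inherited from the hypothesis (and is consistent with the $x\to0+$ limit of~\eqref{linear}, in which $\Phi(x)\to0$ and $x^2V''(x)\to0$). To obtain~\eqref{diffur_cond} I would work with the relation $mH'+H=m\lambda V'$ rather than re-differentiating the equation, since the latter merely returns~\eqref{diffur} and is circular. From this relation $H'(x)=\lambda V'(x)-\frac{1}{m}H(x)$, so $H'(0+)=(\lambda-c/m)V'(0)$, using $H(0)=cV'(0)$. On the other hand, expanding the difference quotient $\frac{H(x)-H(0)}{x}=\bar\mu V'(x)+c\,\frac{V'(x)-V'(0)}{x}+\frac12\bar\sigma^2 x\,V''(x)$ and letting $x\to0+$ gives $H'(0+)=cV''(0+)+\bar\mu V'(0)$, the $x^2V''$ contribution dropping out. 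Equating the two expressions for $H'(0+)$ yields precisely $cV''(0+)+(\bar\mu-\lambda+c/m)V'(0+)=0$, which is~\eqref{diffur_cond}.

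The main obstacle is not the algebra but the boundary analysis at the origin: one must establish that a solution assumed merely \emph{bounded} at $0$ in fact possesses the one-sided limits $V'(0+)$ and $V''(0+)$ used above, and that the boundary terms $\Phi(x)$, $x^2V''(x)$ and $xV''(x)$ genuinely vanish as $x\to0+$. This regularity at $0$ rests entirely on the explicit exponential form of the density (through $\Phi(0)=0$ and $m\Phi'+\Phi=\lambda V$) together with the bootstrap above, and can be carried out by the method used for the boundary behaviour of $V_\gamma$ in~\eqref{Vgamma}--\eqref{V'V''0} (cf.\ Proposition~4.2 of~\cite{AM}). For a general claim density none of this reduction survives, which is exactly why the lemma is stated for the exponential case.
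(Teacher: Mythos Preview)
Your reduction of \eqref{linear} to \eqref{diffur} is correct and is precisely the paper's argument in different notation: the paper writes $g(x)$ for the left-hand side of \eqref{linear} and observes that the left-hand side of \eqref{diffur} equals $g'(x)+g(x)/m$, which is exactly what you obtain by combining your identity $m\Phi'+\Phi=\lambda V$ with $mH'+H=m\lambda V'$. So for the main implication there is no difference in approach.

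The treatment of the boundary condition \eqref{diffur_cond} does differ. You attempt a direct, elementary derivation via the difference quotient of $H$, which is cleaner in spirit but---as you correctly flag---rests on knowing that $V''(0+)$ exists (equivalently, that $xV''(x)\to 0$), a point you defer to the method of Proposition~4.2 in \cite{AM}. The paper instead first establishes \eqref{diffur} and then invokes the general theory of linear ODEs with a regular singular point at $0$ (citing \cite{Fed}, \cite{Wazow}, \cite{BelKonKur2}) to conclude that any solution of \eqref{diffur} which is bounded at $0$ with bounded derivative automatically satisfies \eqref{diffur_cond}. Neither argument is fully self-contained, but they appeal to different bodies of theory: yours to the specific construction in \cite{AM}, the paper's to classical singular-ODE asymptotics. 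The paper's route has the advantage that it operates entirely on the ODE side once \eqref{diffur} is in hand; your route keeps the integro-differential structure in play and would, if completed, give a more explicit computation of $V''(0+)$.

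One further remark: the paper's proof goes on to establish a converse (that any bounded solution of \eqref{diffur} satisfying \eqref{nachalo} and \eqref{diffur_cond} also solves \eqref{linear}), via a uniqueness argument for \eqref{linear}. This is not part of the lemma as stated, so your omission of it is not a gap, but it is used implicitly later in Section~5 when asymptotics of \eqref{diffur} are transported back to solutions of \eqref{linear}.
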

 \begin{proof}\thanks{The proof of this lemma was suggested to the authors by S.V. Kurochkin}.
Let $g(x)$ denote the left hand side of (\ref{linear}). Then differentiating it, we see that the left hand side of (\ref{diffur}) is equal to $g'(x)+g(x)/m$. Therefore if $V$ is a solution to
\eqref{linear} (that is $g(u)\equiv 0$) then it is also a solution to \eqref{diffur}.

From a general theory of the ordinary differential equations with singularities (e.g., see \cite{Fed},  \cite{Wazow} ) and a more detailed analysis in \cite{BelKonKur2} follows an existence of a two parametric family of solutions to (\ref{diffur}) bounded at 0 whose derivative is also  bounded at 0. Moreover each such bounded solution to (\ref{diffur}) satisfies (\ref{diffur_cond}) (cf. \eqref{V'V''0}).

Suppose $W(x)$ is any bounded at 0 solution to (\ref{diffur}), which is subject to (\ref{diffur_cond}).   The condition (\ref{diffur_cond}) implies that $W''(0+)$ is bounded as well.
Obviously for any $c_1,c_2$, the function $c_1W(x)+c_2$ is a bounded solution to (\ref{diffur}), satisfying (\ref{diffur_cond}).

Suppose $V(x)$ is a bounded at $0$  function which  satisfies (\ref{linear}) and (\ref{nachalo}). Let
$V(0)=\zeta, V'(0)=\nu$. Take $$\tilde{W}(x)=\nu(W(x)-W(0))/W'(0) +\zeta.$$ Then $\tilde{W}(x)$
obviously satisfies (\ref{diffur}), (\ref{diffur_cond}).  Substitute $\tilde{W}$ into   (\ref{linear}) instead of $V$, and let $\tilde{g}(x)$ denote the left hand side of  (\ref{linear}). Then the left hand side of (\ref{diffur}) is equal to $\tilde{g}'(x)+\tilde{g}(x)/m$. Thus
$\tilde{g}(x)=Ce^{-x/m}$ for some constant $C$.
Taking into account  that $\tilde{W}'(0)$ and $\tilde{W}''(0)$ are bounded, we can substitute
these expressions into the left hand side of (\ref{linear}) and see that $\tilde{g}(0)=
-\lambda \tilde{W}(0) +c\tilde{W}'(0)=0$ due to condition (\ref{nachalo}) and
the fact that $\tilde{W}(0)=V(0)$ and $\tilde{W}'(0)=V'(0)$ by construction.
Therefore $\tilde{W}$ satisfies (\ref{linear}) , (\ref{nachalo}).

From \cite{Kon3} and \cite{Kon2} we know that there exists at most one solution $V$  for an integro-differential equation (\ref{linear}) with given $V(0)$ and $V'(0)$ subject to (\ref{nachalo}). Thus $\tilde{W}(x)=V(x)$.
\end{proof}

\begin{lemma}\label{equationsol}
Let  $F(z)=1-e^{-z/m}$ be a distribution function of an exponential random variable with mean $m$. Then
the  integro-differential equation \eqref{linear}  with the boundary condition
\eqref{nachalo} has a  one-parametric family of solutions bounded at 0. In the neighborhood each solution of this family of $0$ is represented by an asymptotic series
\begin{equation}\label{asymp0}
 V(x)   = C_0 + D_1\left[x+\sum_{k=2}^{\infty}C_k x^k\right], \qquad x\to 0,
\end{equation}
where $C_0=V(0)$, $D_1=\frac{\lambda}{c}C_0$ and
\begin{equation}\label{tbel_asymp_phi}
C_k= D_k/k, \qquad k=2,3,\ldots,
\end{equation}
\begin{equation}\label{D_2}
D_2=-\left(\frac{\bar{\mu}-\lambda}{c}+\frac 1m\right), \quad
D_3=-\frac{D_2[\bar{\sigma}^2+2\bar{\mu}-\lambda +c/m]+\bar{\mu}/m}{2c},
\end{equation}
\begin{equation}\label{tbel_1D_k}
\begin{split}
D_k=&-\frac{D_{k-1}[(k-1)(k-2)\bar{\sigma}^2/2+(k-1)\bar{\mu}-\lambda+c/m]}{c(k-1)}\\
&+\frac{(1/m)D_{k-2}[(k-3)\bar{\sigma}^2/2+\bar{\mu}]}{c(k-1)},\qquad k\geq 4.
\end{split}
\end{equation}
The asymptotic expansion means that the difference between the $V(x)$ and the first $n$ terms in the right hand side of (\ref{asymp0}) is $o(x^n)$ when $x\to 0$.
\end{lemma}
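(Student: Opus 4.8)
The plan is to prove the two assertions in turn: the existence of a one-parameter family of bounded solutions, and then the explicit asymptotic expansion together with the recursion for its coefficients. For existence I would build directly on Lemma~\ref{linearinteg}. That lemma, together with the general theory of singular ordinary differential equations (\cite{Fed}, \cite{Wazow}) and the refined analysis of \cite{BelKonKur2}, already records that the third-order equation \eqref{diffur} admits a two-parameter family of solutions bounded at $0$ whose first derivative is also bounded. This is natural because \eqref{diffur} contains no undifferentiated $V$, so it is in effect a second-order equation for $W=V'$ plus the integration constant $V(0)$; near $x=0$ this equation has a rank-one irregular singular point (after dividing by $x^2$ the coefficient of $W'$ blows up like $2c/(\bar{\sigma}^2x^2)$), and a dominant-balance computation shows exactly one branch stays bounded while the companion behaves like $\exp\{2c/(\bar{\sigma}^2x)\}$ and is discarded. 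Imposing the boundary condition \eqref{nachalo}, a single linear relation tying $V'(0)$ to $V(0)$, then cuts the two-parameter family down to a one-parameter family parametrized by $C_0=V(0)$; by the correspondence proved in Lemma~\ref{linearinteg} these are precisely the bounded solutions of \eqref{linear}--\eqref{nachalo}.

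For the coefficients I would substitute a formal series $V(x)=\sum_{k\ge 0}a_kx^k$ into \eqref{diffur} (equivalently into \eqref{linear}, after writing the claim integral as $\tfrac{\lambda}{m}e^{-x/m}\int_0^xV(u)e^{u/m}\,du$) and match powers of $x$. Since \eqref{diffur} carries no $a_0$-term, the constant $C_0=a_0$ stays free while the relation \eqref{nachalo} forces $a_1=\tfrac{\lambda}{c}a_0$, i.e. $D_1=\tfrac{\lambda}{c}C_0$. The coefficient of $x^0$ then relates $a_2$ to $a_1$ and yields $D_2$; the coefficient of $x^1$ relates $a_3$ to $a_2$ and $a_1$ and yields $D_3$; and for $n\ge 2$ the coefficient of $x^n$ determines $a_{n+2}$ from $a_{n+1}$ and $a_n$, since the six surviving contributions come from $x^2V'''$, $\tfrac{x^2}{m}V''$, $2(1+\bar{\mu}/\bar{\sigma}^2)xV''$, $\tfrac{2c}{\bar{\sigma}^2}V''$, $\tfrac{2\bar{\mu}}{m\bar{\sigma}^2}xV'$ and $\tfrac{2(\bar{\mu}-\lambda+c/m)}{\bar{\sigma}^2}V'$; solving for the top index gives the three-term recursion \eqref{tbel_1D_k} for $k\ge 4$. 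After the identification $a_1=D_1$ one rewrites the coefficients in normalized form, $a_k=D_1D_k/k$ for $k\ge 2$, i.e. $C_k=D_k/k$ as in \eqref{tbel_asymp_phi}, recovering exactly \eqref{asymp0} and the base-case formulas in \eqref{D_2}.

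The hard part will be showing that this formal series is a genuine \emph{asymptotic} expansion of the solution produced above rather than merely a formal solution, because at the irregular singular point convergence of the series can fail. The cleanest route is to establish that the bounded solution is $C^\infty$ at $0$: rewriting \eqref{linear} as $\tfrac12\bar{\sigma}^2x^2V''(x)=-\lambda\int_0^xV(x-z)\,dF(z)+\lambda V(x)-(\bar{\mu}x+c)V'(x)$, the right-hand side vanishes at $x=0$ by \eqref{nachalo}, so $V''$ is bounded, with $V''(0+)$ pinned by \eqref{diffur_cond}; differentiating and bootstrapping---the claim integral is as smooth as $V$ for exponential $F$---gives boundedness of every derivative at $0$. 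Once $V\in C^\infty$ near $0$, Taylor's theorem with remainder shows that truncating its Taylor series at order $n$ leaves an $o(x^n)$ error, which is exactly the asymptotic statement, and the Taylor coefficients must obey the recursion derived above, hence coincide with the $D_k$. Alternatively one may invoke directly the asymptotic-expansion theory for solutions at singular points in \cite{Wazow} and \cite{BelKonKur2}.
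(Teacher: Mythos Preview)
Your proposal is correct and follows essentially the same route as the paper. The paper's own argument is extremely terse: it invokes Lemma~\ref{linearinteg} to pass from \eqref{linear} to the third-order ODE \eqref{diffur}, then appeals to the general theory of ODEs with pole-type singularities (\cite{Wazow}, Ch.~4, and \cite{KonNB}) for the existence and asymptotic structure of bounded solutions, and finally cites \cite{BelKonKur2} for the explicit coefficient formulas \eqref{tbel_asymp_phi}--\eqref{tbel_1D_k}. You do the same reduction via Lemma~\ref{linearinteg} and the same counting (two-parameter bounded family for \eqref{diffur}, cut to one parameter by \eqref{nachalo}), but instead of simply citing \cite{BelKonKur2} you actually carry out the power-series substitution into \eqref{diffur} and derive the three-term recursion, and instead of appealing to \cite{Wazow}/\cite{KonNB} for the asymptotic nature you give a direct $C^\infty$ bootstrap from \eqref{linear} and \eqref{diffur_cond}. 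Your closing sentence (``Alternatively one may invoke directly the asymptotic-expansion theory \ldots\ in \cite{Wazow} and \cite{BelKonKur2}'') is in fact exactly what the paper does; your more explicit treatment is a welcome expansion of what the paper leaves to the references.
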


The proof of this theorem can be obtained from Lemma \ref{linearinteg}, which reduces the integro-differential equation  (\ref{linear}) to an ordinary differential equation(\ref{diffur}), and a general theory of the  ordinary differential equations with pole-type singularities (see \cite{Wazow} Ch. 4 and \cite{KonNB}). The particular expression for the coefficients (\ref{tbel_asymp_phi})-(\ref{tbel_1D_k}) has been found in \cite{BelKonKur2}.

In the numerical examples, we consider exponential claim size distribution with
mean $m=1$. Further, set $\sigma=0.1$, $\lambda=0.09$, and $c=0.02$. 
The other parameters are given as follows via three examples. 

\begin{example}\label{ex1} ($\mu>r$ and $a<b$) We choose 
$\mu=0.02, a=1, b=20, r=0.015$. 
\end{example}

\begin{example}\label{ex2}($\mu<r$ and $a>b$) We choose 
$\mu=0.02, a=20, b=1, r=0.025$. 
\end{example}

\begin{example}\label{ex3} ($\mu<r$ and $a<b$) We choose 
$\mu=0.01, a=1, b=20, r=0.015$. 
\end{example}

Let $V$ be the solution of the HJB equation (\ref{HJB}) with $V(0)=1$.
From the previous sections, we know that $V(\infty)$ is finite and the function $V(x)/V(\infty)$ coincides with the maximal survival probability $\delta$ \eqref{opt-svp}; and the optimal  feedback control function $\alpha^*_V(x)$ is given by \eqref{alphastar}. If $\alpha _V^*(x)=\alpha _V(x)$
then the function $V$ satisfies
\begin{equation} \label{noconstrain} \lambda\int_0^x V(x-z)\,dF(z)-\lambda
V(x)+ (c+rx)V'(x) =\frac{1}{2}\frac{(\mu
-r)^2(V'(x))^2}{\sigma ^{2}V''(x)};\; \;
 \end{equation}
otherwise it satisfies (\ref{linear}) with
\begin{equation} \label{aa}
\bar{\mu}=a\mu +(1-a)r, \; \bar{\sigma}=a\sigma,
\end{equation}
when $\alpha _V^*(x)=a$, and with
\begin{equation} \label{ab}
\bar{\mu}=-b\mu +(1+b)r, \; \bar{\sigma}=b\sigma,
\end{equation}
when $\alpha _V^*(x)=-b$.

Numerical computations show that the optimal investment
strategy is quite surprising: there exists $0<x_1<x_2<x_3<\infty$ such that
\begin{eqnarray}
\alpha^*_V(x) = 
\begin{cases} a & 0\le x<x_1\\
 -b & x_1\le x<x_2\\
 a & x_2\le x\le x_3\\
 \alpha_V(x) & x>x_3.
\end{cases}
\end{eqnarray}
In the following we show that  such investment strategies typically occur
when (5.1) holds and $b$ is large.

From Lemma \ref{VVa} we know that in the neighborhood of 0 the function $V$ satisfies
(\ref{VVa}), that is (\ref{linear}) with $\bar{\mu}$ given by (\ref{aa}); and $\alpha _V^*(x)=a$ in
this neighborhood. In view of (\ref{cond}), the coefficient $D_2$ in the asymptotic
expansion (\ref{asymp0}) of the function $V$ at 0 is positive. Thus $V^{\prime\prime}(x)\sim D_1D_2>0 $, when $x\to 0$ and
\begin{equation}
\label{sim}
\alpha _V(x)\sim -\frac{(\mu -r)}{\sigma ^2 D_2x}<0.
\end{equation}
If one chooses instead of $\alpha _V(x)$ its asymptotic approximation given by (\ref{sim}), then
we see that $-\frac{(\mu -r)}{\sigma ^2 D_2x}> \frac{a-b}{2}$ for
$x>{\underline{x}}=2\frac{(\mu-r)}{\sigma ^2}\frac{1}{D_2(b-a)}$. If $b$ is large enough then
$\underline{x}$ is small and the ratio of $\alpha_V(x)$ and the right hand side of (\ref{sim}) is close to 1 (note that for $\bar{\mu}$ given by (\ref{aa}) the asymptotic expansion (\ref{asymp0}) does not depend on $b$) and  $\alpha_V(\underline{x})> \frac{{a }-b}{2}$. The choice of $a$ or $b$ in our example ensures that this is the case, as numerical
computations show. Therefore there exists
$x_1>0$ such that $\alpha _V(x) > \frac{{a }-b}{2}$ for $x>x_1$, while $V''(x_1)>0$ and
$\alpha_V(x)<\frac{{a}-b}{2}$ for $x<x_1$. In view of \eqref{alphastar}, the function $V$ satisfies (\ref{linear}) with $\bar{\mu}$ given by (\ref{ab}) in the right neighborhood of  $x_1$ and it satisfies (\ref{linear}) with $\bar{\mu}$ given by  (\ref{aa}) for $x<x_1$.

This shows that $\alpha _V^*(x)=a$ for $x<x_1$ and $\alpha _V^*(x)=-b$ in the right neighborhood of $x_1$.  Let us show that there exists $x_2$ such that $\alpha _V^*(x)=-b$ for
$x_1<x<x_2$, while $\alpha _V^*(x)=a$ in the right neighborhood of  $x_2$. That is the point $x_1$ is the first point of the ``extreme" switching from the maximal long position to the maximal short position in  the risky asset, while $x_2$ is the point of the second ``extreme" switching from the maximal short to the maximal long position.  Since $V''(x_1)>0$ the function $V$ is
convex in the neighborhood of $x_1$. On the other hand, if $V''(x)\ge 0$ for all $x>x_1$ then
$V'(x)\ge V'(x_1)>0$ for all $x>x_1$ and $V(x)\to\infty$ as $x\to\infty$. Contradiction! 
Therefore there exists a point $\bar{x}$ such that $V''(\bar{x})=0$ while
$V''(x)>0$ for $x<\bar{x}$. Therefore, $\alpha_V(x)\to-\infty$ as $x\uparrow \bar{x}$ and there
exists $x_2$ such that $\alpha_V(x)>\frac{a-b}{2}$ for $x_1<x<x_2$ 
and $\alpha _V(x) \leq \frac{{a }-b}{2}$ in the right neighborhood of $x_2$.

Suppose that $\alpha^*_V(x)=a$ for all $x>x_2$. Then the function $V$ satisfies \eqref{linear}
with $\bar{\mu}$ and $\bar{\sigma}$  given by (\ref{aa}).  From the general theory of the differential equations with singularity at infinity (see \cite{Wazow} Chapter 4, or \cite{Fed}) follows that for a solution $V$  of  (\ref{linear}) we have
 $$V'(x)=dx^{-2\bar{\mu}/\bar{\sigma}^2}[1+o(1)],$$ and
   $$V ''(x)=d(-2\bar{\mu}/\bar{\sigma}^2)x^{-2\bar{\mu}/\bar{\sigma}^2-1}[1+o(1)],$$
 when $x\rightarrow \infty$ for some  positive $d$.
(This precise formula on asymptotic approximation for
$V'$ and $V''$ was  proved in \cite{BelKonKur2}, using an
approach developed in \cite{KonNB}. Also a similar result was obtained in \cite{FrolKP}). Thus for $V$ being a solution to (\ref{linear}), it holds
\begin{equation}
\label{infty1}
\alpha_V(x)\to\frac{(\mu -r)\bar{\sigma}^2}{2\bar{\mu}\sigma
^2 }, \quad x\rightarrow \infty.
\end{equation}
Obviously the right hand side of \eqref{infty1} is positive and less than $a$. Really
$\frac{(\mu -r)\bar{\sigma}^2}{2\bar{\mu}\sigma^2 }<a$ is equivalent to
$\frac{1}{a}+\frac{r}{(\mu-r)a^2}>\frac{1}{2a}$  which is obvious. Thus there exists
$x_3$ such that $\alpha_V^*(x)=a$ for $x_2<x<x_3$ and $\alpha_V^*(x)=\alpha_V(x)$ for
$x>x_3$. That is the switching at the point $x_3$ is from the maximal long position to a long
position which is less than the maximal possible.
The numerical computations are depicted in the graphs. 
(The authors would like to thank Yuri Gribov for providing numerical  
computations for the first two examples.)

\begin{figure}
\begin{center}
\includegraphics[width=3.5 in]{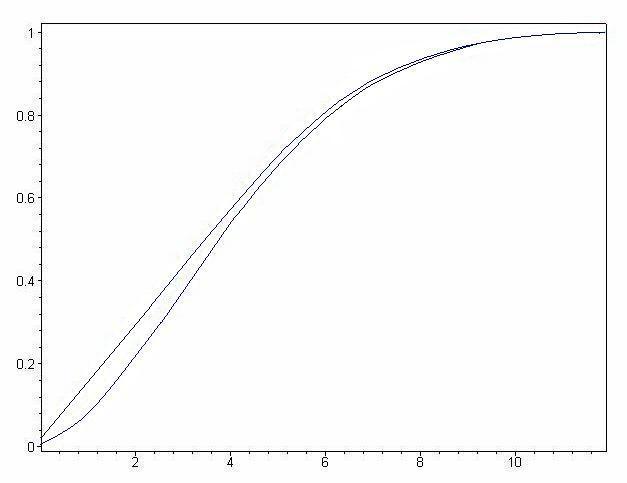}
\caption{Example 1 - Maximal survival probability 
(The lower curve is for the no-borrowing-no-shortselling case)}
\end{center} 
\label{ex1-maxprob}
\end{figure}

\begin{figure}
\begin{center}
\includegraphics[width=3.5 in]{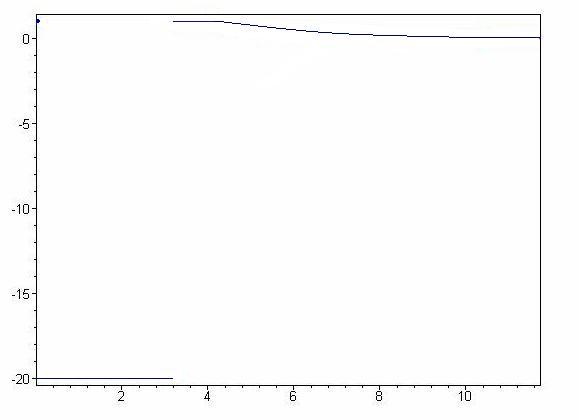}
\caption{Example 1 - Optimal investment proportion}
\end{center} 
\label{ex1-optinv}
\end{figure}

\begin{figure}
\begin{center}
\includegraphics[width=3.5 in]{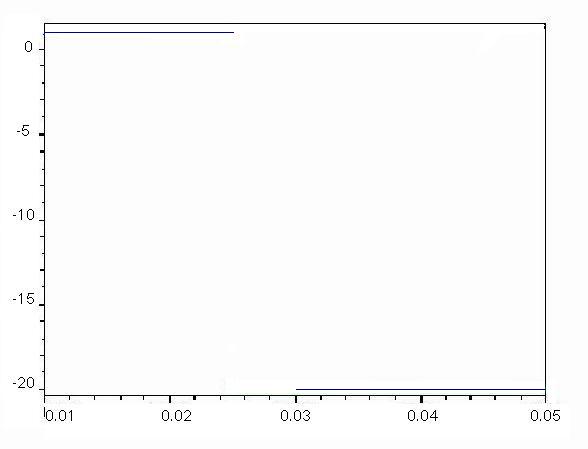}
\caption{Example 1 - Optimal investment proportion near zero surplus}
\end{center} 
\label{ex1-nearzero}
\end{figure}

\begin{figure}
\begin{center}
\includegraphics[width=3.5 in]{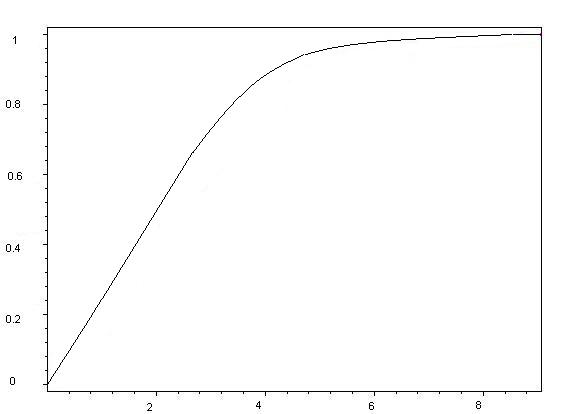}
\caption{Example 2 - Maximal survival probability}
\end{center} 
\label{ex2-maxprob}
\end{figure}

\begin{figure}
\begin{center}
\includegraphics[width=3.5 in]{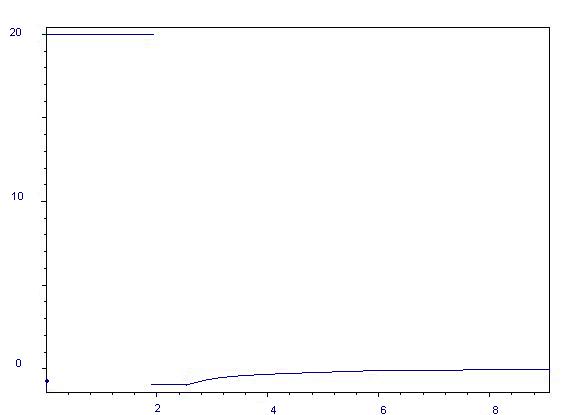}
\caption{Example 2 - Optimal investment proportion}
\end{center} 
\label{ex2-optinv}
\end{figure}

\begin{figure}
\begin{center}
\includegraphics[width=3.5 in]{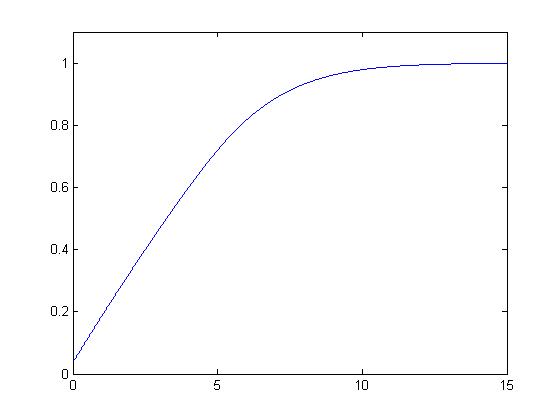}
\caption{Example 3 - Maximal Survival Probability}
\end{center} 
\label{ex3-maxprob}
\end{figure}

\begin{figure}
\begin{center}
\includegraphics[width=3.5 in]{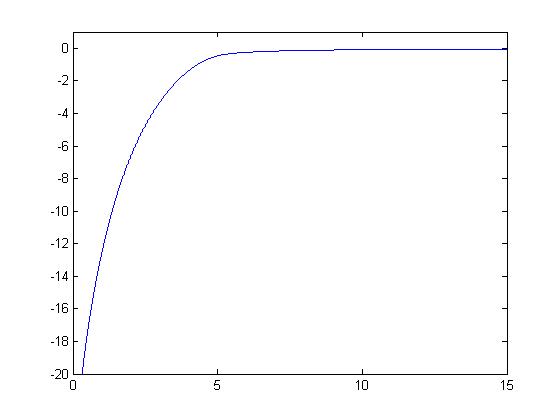}
\caption{Example 3 - Optimal investment proportion}
\end{center} 
\label{ex3-optinv}
\end{figure}

\section{Economic Interpretation}
As we can see from Lemma~\ref{VVa} and Lemma~\ref{VVb} in the neighborhood of 0, when the surplus level is low, the optimal policy is always to take position which brings the highest rate of return. If $\mu>r$ then it is optimal to borrow and to take the maximal long position, if
$\mu<r$ then the optimal policy requires to shortsell the risky asset and put into the risk-free asset the maximal allowable amount.  In case $a>b$ and $\mu>r$  the optimal policy would require no
shortselling, maintaining at low levels maximal long position and then purchasing risky asset at the level less than the maximum possible so as to reduce the volatility of the portfolio. Like wise when $\mu<r$ and $b>a$,  only shortselling is needed, at first at the maximal possible level and then at the level lower than the maximal.

The nature of the optimal policy becomes less obvious, when the surplus level increases in the case when $b>a$, while $\mu>r$  .  Depending on the available actions (that is, depending on $a$ and $b$) we might have qualitatively  completely  different optimal policies.  In this case,  the nature of the optimal policy is the result of a rather nontrivial  interplay between  the  rate of return  and the volatility.  As we saw in the example analyzed  in Section 6, when the condition (\ref{cond}) is
satisfied and $b$ is much larger than $a$, then at certain surplus levels we have to switch from the maximal long position, which was used at low surplus levels to the maximal short position, now ``gambling" on the effect of a largest possible  volatility which must increase the surplus level  with higher probability than otherwise would be the case.  At first glance this policy might appear counterintuitive, if one takes into account that for such a policy the rate of return is the lowest possible.  When the surplus level increases even more, it becomes again optimal to stick to the highest possible rate of return; and with high surplus levels, it is optimal to have lower than the maximal possible rate of return, simultaneously having lower volatility. It is worth mentioning, that when $b$ is not much larger than $a$ a similar analysis show that the effect of switching to the short position is not observed, and no shortselling is optimal at any surplus levels; i.e., the set $S_2^b$ defined in \eqref{S2ab} could be empty then.

A similar phenomenon can be observed when  $\mu<r$ and $a>b$. For certain values of the parameters, the optimal policy at low surplus levels would involve shortselling the risky asset so as to have the maximal rate of return for the resulting portfolio, while at higher surplus level we might observe a switch to the policy with the highest volatility, that is to the one with maximal long position in the risky asset even though that is the policy with the lowest return rate.

It is worth mentioning that none of those effects can be observed when we have either unconstrained case as in  \cite{HP}  or the no-borrowing-no-shortselling case  of \cite{AM}. In \cite{HP} the ability to have unlimited large leverage enables one to adhere only to the policies without shortselling,
 while in \cite{AM}, the no-shortselling constraint does not allow to achieve sufficiently large volatility, of the portfolio, so as to reach higher levels with higher probability while having a lower rate of return.


{\bf Appendix 1.}\\
We now show Lemmma~\ref{S10} and write
\begin{equation}
\label{fun2}
\begin{split}
\xi_{\gamma,W}(x)&=-\frac{\gamma^2(\mu-r)xW'(x)}{2[M(W)(x)-(c+rx+(\mu-r)\gamma x)W'(x)]},\\
\eta_{\gamma,W}(x)&=\frac{2[M(W)(x)-(c+rx+(\mu-r)\gamma x)W'(x)]}{\sigma^2\gamma^2x^2},
\end{split}
\end{equation}
for any function $W$ with $W'>0$.
Note that $\eta_{\gamma,W}(x)=-\frac{(\mu-r)W'(x)}{\sigma^2x\xi_{\gamma,W}(x)}$, and that $\xi_W(x)=\alpha_W(x)$ and $\eta_{\gamma,W}(x)=W''(x)$ 
if $W$ solves $\mathcal{L}^{(\gamma)}W(x)=0$.

We prove result $(i)$ by contradiction.
For convenience, note $V=V_a$ on $(0,\varepsilon)$ when $\mu>r$.
Suppose $\phi_V(x)<a$ for some $x\in(0,\varepsilon)$.
From $\mathcal{L}^{(a)}V(x)=0$, we have
 $\xi_{a,V}(x)=\alpha_V(x)$ and $\eta_{a,V}(x)=V''(x)$.
From $\phi_V(x)<a$ and $\mu-r>0$, we have
$$2[M(V)(x)-(c+rx)V'(x)]<a(\mu-r)xV'(x),$$
wherefrom it holds 
\begin{eqnarray}
\label{in1}
2[M(V)(x)-(c+rx+(\mu-r)ax)V'(x)]<-a(\mu-r)xV'(x)<0,
\end{eqnarray}
and it implies
$V''(x)=\eta_{a,V}(x)<0$; further notice that $V$ solves
 $$\underset{\theta\in[-b,a]}{\sup}\mathcal{L}^{(\theta)}V(x)=\mathcal{L}^{(a)}V(x)=0,$$
where $a$ is the maximizer. Hence the vertex
of the quadratic function $\mathcal{L}^{(\theta)}V(x)$ in $\theta$ must be to the right of $a$,
i.e., it must hold $\alpha_V(x)=\xi_{a,V}(x)\geq a$; with \eqref{in1}, we obtain
\begin{eqnarray}
\label{in2}
2a[M(V)(x)-(c+rx+(\mu-r)ax)V'(x)]\geq -a^2(\mu-r)xV'(x).
\end{eqnarray}
Inequality \eqref{in2} yields
$\phi_V(x)\geq a$. Contradiction!
Result $(ii)$ can be proved in the same manner.

\vspace{.5cm}
To show Theorem~\ref{case2}.
we need the following Lemma:
\begin{lemma}
\label{Vgammax0}
For  $x_0\geq 0$ and $\gamma\in\{a, -b\}$, there exists function $V_{\gamma,x_0}$ 
satisfying the following: $V_{\gamma,x_0}(x)=V(x)$ on $[0,x_0]$; $V'_{\gamma,x_0}(x_0)=v(x_0)$;
and for $x\in (x_0,\infty)$
\begin{equation}\label{eq-gamma}
\mathcal{L}^{(\gamma)}V_{\gamma,x_0}(x)=0. 
\end{equation}
\end{lemma}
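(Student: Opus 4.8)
The plan is to read $\mathcal{L}^{(\gamma)}V_{\gamma,x_0}(x)=0$ on $(x_0,\infty)$ as a linear Volterra integro-differential initial value problem started at $x_0$, and to solve it by exactly the operator-theoretic argument already carried out for \eqref{Operator}--\eqref{DiffEq}, now with the infimum deleted because $\gamma$ is held fixed. First I would dispose of the boundary case $x_0=0$: there the leading coefficient $\tfrac12\sigma^2\gamma^2x^2$ of $\mathcal{L}^{(\gamma)}$ degenerates at the origin, but the required function is precisely the $V_\gamma$ supplied by Proposition 4.2 in \cite{AM}, since $V_\gamma(0+)=1=V(0+)$ and $V'_\gamma(0+)=\lambda/c=v(0+)$. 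So from now on assume $x_0>0$ (and $x_0$ in the range where $V,v$ are defined).

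For $x>x_0$ I would split the convolution in $M$ according to whether the argument lies in the known region $[0,x_0]$, where $V_{\gamma,x_0}=V$, or in the unknown region $[x_0,x]$:
\begin{equation*}
\int_0^x V_{\gamma,x_0}(x-s)f(s)\,ds=\int_0^{x_0}V(y)f(x-y)\,dy+\int_{x_0}^x V_{\gamma,x_0}(y)f(x-y)\,dy.
\end{equation*}
The first term is a fixed continuous (indeed smooth) forcing function of $x$, because $f$ is continuous and $V$ is continuous on the compact set $[0,x_0]$; the second couples only to values of $V_{\gamma,x_0}$ at points $\le x$. Hence, writing $w=V_{\gamma,x_0}'$ and $W(x)=V(x_0)+\int_{x_0}^x w(y)\,dy$, the equation $\mathcal{L}^{(\gamma)}V_{\gamma,x_0}=0$ is equivalent to $w'(x)=\tilde T w(x)$, where $\tilde T$ is the fixed-$\gamma$ analogue of \eqref{Operator},
\begin{equation*}
\tilde T w(x)=\frac{2}{\sigma^2\gamma^2x^2}\Bigl\{\lambda\Bigl(W(x)-\int_0^{x_0}V(y)f(x-y)\,dy-\int_{x_0}^x W(y)f(x-y)\,dy\Bigr)-[c+rx+(\mu-r)\gamma x]\,w(x)\Bigr\},
\end{equation*}
subject to $w(x_0)=v(x_0)$.

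On each compact interval $[x_0,K]$ the coefficient $\tfrac12\sigma^2\gamma^2x^2$ is bounded below by $\tfrac12\sigma^2\gamma^2x_0^2>0$ (recall $\gamma\in\{a,-b\}$ with $a,b>0$), so $\tilde T$ maps $C[x_0,K]$ into itself and is Lipschitz in the supremum norm by the very estimates already proved for $T$; the estimate is in fact cleaner here, since there is no infimum over $\theta$ to control. The Picard--Lindel\"of argument used to obtain \eqref{DiffEq} then yields a unique $w\in C^1[x_0,K]$ with $w'=\tilde T w$ and $w(x_0)=v(x_0)$, and letting $K\to\infty$ extends $w$ to $[x_0,\infty)$. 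Setting $V_{\gamma,x_0}=V$ on $[0,x_0]$ and $V_{\gamma,x_0}(x)=V(x_0)+\int_{x_0}^x w(y)\,dy$ for $x>x_0$ produces a function that agrees with $V$ on $[0,x_0]$, satisfies $V'_{\gamma,x_0}(x_0)=w(x_0)=v(x_0)$, and solves \eqref{eq-gamma} on $(x_0,\infty)$; continuity of $\tilde T w$ gives $w\in C^1(x_0,\infty)$, whence $V_{\gamma,x_0}\in C^2(x_0,\infty)$ as required.

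The one genuine subtlety, and the step I would watch most carefully, is that the matching at $x_0$ is only $C^1$ and not $C^2$: the one-sided second derivatives $V''(x_0-)$ and $V_{\gamma,x_0}''(x_0+)$ need not coincide, because $V$ near $x_0$ may solve the full HJB equation with a maximizer different from the frozen $\gamma$. This is harmless, since \eqref{eq-gamma} is asserted only on the open interval $(x_0,\infty)$ and the lemma demands only value- and first-derivative matching at $x_0$; but it means one must not claim global $C^2$ regularity, and one must check that the forcing term $\lambda\int_0^{x_0}V(y)f(x-y)\,dy$ is continuous up to $x_0^+$ so that $\tilde T w$ is continuous there. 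The remaining work — the explicit Lipschitz constant and the $K\to\infty$ passage — is identical to what was carried out for $T$.
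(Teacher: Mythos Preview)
Your proposal is correct. The paper does not actually prove this lemma---it is merely stated at the end of Appendix~1 and then used in Appendix~2---so there is no paper proof to compare against; your approach of freezing $\theta=\gamma$ in the operator \eqref{Operator} and rerunning the Lipschitz/Picard--Lindel\"of argument of Lemma~3.3 with $\varepsilon$ replaced by $x_0$ (falling back on Proposition~4.2 of \cite{AM} when $x_0=0$) is precisely the construction the paper's own machinery suggests, and your observation that only $C^1$ matching at $x_0$ is claimed (not $C^2$) is the right reading of the statement.
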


\vspace{.5cm}

{\bf Appendix 2.}

Now we prove Theorem~\ref{case2}.
For any $x\in S_1$, define $x_1=\inf\{s:(s,x)\subset S_1\}$. From Lemma~\ref{S10},
we have $x_1>0$. It holds $\phi_V(x_1)=a$ by continuity. 
Select $x_0\in(x_1,x)$ such that $A(x_0)<\phi_V(x_0)<a$.
Then it holds $I_V(x_0)>0$. 
From Lemma~\ref{Valphax0}, there exists a function $V_{\alpha,x_0}$ 
concave and twice continuously differentiable on $(x_0,\infty)$ that solves
$$\mathcal{L}^{(\alpha_{V_{\alpha,x_0}}(x))}V_{\alpha,x_0}(x)=0,$$
on $(x_0,\infty)$. 
Now define 
$$x_2=\sup\{s:s\geq x_0;A(t)<\phi_{V_{\alpha,x_0}}(t)<a,\forall\ t\in [x_0, s)\};$$
noticing $V''_{\alpha,x_0}(s)<0$ and  
\begin{equation*}
0<\alpha_{V_{\alpha,x_0}}(s)=\phi_{V_{\alpha,x_0}}(s)<a,
\end{equation*}
for $s\in (x_0,x_2)$, we have 
$$\underset{\theta\in[-b,a]}{\sup}\mathcal{L}^{(\theta)}V_{\alpha,x_0}(s)=
\mathcal{L}^{(\alpha_{V_{\alpha,x_0}}(s))}V_{\alpha,x_0}(s)=0,$$
i.e., $V_{\alpha,x_0}$ solves the HJB equation. By Lemma~\ref{UV}, we conclude
$V_{\alpha,x_0}(s)=V(s)$ and $\phi_V(s)=\phi_{V_{\alpha,x_0}}(s)<a$
 on $(x_0,x_2)$. Since $x_2$ is the supremum, if $x_2\leq x$, then we have $\phi_V(x_2)=a$ 
 which contradicts $x_2\in S_1$. Thus it holds $x_2>x$ and we conclude that
 $V$ equals $V_{\alpha,x_0}$ in a neighborhood of $x$ and solves \eqref{HJBalpha} with maximizer $\alpha^*_V=\alpha_V$.
 
For any $x\in S_2^a$, 
we choose $x_0(<x)$ such that $2a<\phi_V(s)<\frac{2ab}{b-a}$ for all $s$ on $[x_0,x]$;
from Lemma~\ref{Vgammax0}, there exists $V_{a,x_0}$ such that $V_{a,x_0}(s)=V(s)$
for $s\in [0,x_0]$ and it solves $\mathcal{L}^{(a)}V_{a,x_0}(s)=0$ on $(x_0,\infty)$. 
Hence we have $V''_{a,x_0}(s)=\eta_{a,V_{a,x_0}}(s)$ and $\alpha_{V_{a,x_0}}(s)=\xi_{a,V_{a,x_0}}(s)$ on $(x_0,\infty)$.
Now define 
 $$x_2=\sup\{s:s\geq x_0; 2a<\phi_{V_{a,x_0}}(t)<\frac{2ab}{b-a},
\forall\ t\in [x_0, s)\};$$
it follows $V''_{a,x_0}(s)=\eta_{a,V_{a,x_0}}(s)>0$ and 
$\alpha_{V_{a,x_0}}(s)=\xi_{a,V_{a,x_0}}(s)<\frac{a-b}{2}$. 
Thus $V_{a,x_0}$ solves HJB equation~\eqref{HJBa} on $(x_0,x_2)$. 
By Lemma~\ref{UV}, we conclude 
$V_{a,x_0}(s)=V(s)$ and $\phi_{V_{a,x_0}}(s)=\phi_V(s)$
 on $(x_0,x_2)$; since $x_2$ is the supremum, if $x_2\leq x$, 
then $\phi_V(x_2)=2a$ or $\phi_V(x_2)=\frac{2ab}{b-a}$.
This contradicts $x_2\in S^a_2$. 
Hence we have $x<x_2$. Thus $V$ is equal to $V_{a,x_0}$ and solves 
\eqref{HJBa} with maximizer $\alpha^*_V=a$ on $(x_0,x_2)$ that contains $x$.

Similarly one can show for any  $x\in S_2^b$, in a neighborhood of $x$, 
 $V$ is equal to $V_{-b,x_0}$ and solves \eqref{HJBb} with maximizer $\alpha^*_V=-b$.
\end{document}